\tikzstyle{every picture}=[baseline=-0.25em,shorten <=-0.1pt]
\tikzstyle{dotpic}=[scale=0.5]
\tikzstyle{braceedge}=[decorate,decoration={brace,amplitude=1mm,raise=-1mm}]
\tikzstyle{dot}=[inner sep=0.7mm,minimum width=0pt,minimum height=0pt,fill=black,draw=black,shape=circle]
\tikzstyle{black dot}=[dot]
\tikzstyle{white dot}=[dot,fill=white]
\tikzstyle{gray dot}=[dot,fill=gray!40!white]
\tikzstyle{alt white dot}=[white dot,label={[xshift=3mm,yshift=-0.05mm,font=\tiny]left:$*$}]
\tikzstyle{alt gray dot}=[gray dot,label={[xshift=3mm,yshift=-0.05mm,font=\tiny]left:$*$}]
\tikzstyle{white norm}=[rectangle,fill=white,draw=black,minimum height=2mm,minimum width=2mm,inner sep=0pt,font=\small]
\tikzstyle{gray norm}=[white norm,fill=gray!40!white]
\tikzstyle{square box}=[rectangle,fill=white,draw=black,minimum height=5mm,minimum width=5mm,font=\small]
\tikzstyle{square gray box}=[rectangle,fill=gray!30,draw=black,minimum height=6mm,minimum width=6mm]
\tikzstyle{diredge}=[->]
\tikzstyle{rdiredge}=[<-]
\tikzstyle{dashed edge}=[dashed]
\tikzstyle{cross}=[preaction={draw=white, -, line width=3pt}]
\newcommand{\pantsalg}{%
\,\begin{tikzpicture}[dotpic,scale=0.8]
    \node [style=none] (0) at (-0.75, -0.5) {};
    \node [style=none] (1) at (0.25, 0.5) {};
    \node [style=none] (2) at (0.75, -0.5) {};
    \node [style=none] (3) at (0.25, -0.5) {};
    \node [style=none] (4) at (-0.25, 0.5) {};
    \node [style=none] (5) at (-0.25, -0.5) {};
    \draw [style=diredge, in=-90, out=90] (2.center) to (1.center);
    \draw [style=diredge, in=90, out=-90] (4.center) to (0.center);
    \draw [style=diredge, in=90, out=90, looseness=2.00] (5.center) to (3.center);
\end{tikzpicture}\,}
\newcommand{\dotcounit}[1]{%
\,\begin{tikzpicture}[dotpic,yshift=-1mm]
\node [#1] (a) at (0,0.25) {}; 
\draw [diredge] (0,-0.2)--(a);
\end{tikzpicture}\,}
\newcommand{\dotunit}[1]{%
\,\begin{tikzpicture}[dotpic,yshift=1.5mm]
\node [#1] (a) at (0,-0.25) {}; 
\draw [diredge] (a)--(0,0.2);
\end{tikzpicture}\,}
\newcommand{\dotcomult}[1]{%
\,\begin{tikzpicture}[dotpic,yshift=0.5mm]
  \node [#1] (a) {};
  \draw [diredge] (-90:0.55)--(a);
  \draw [diredge] (a) -- (45:0.6);
  \draw [diredge] (a) -- (135:0.6);
\end{tikzpicture}\,}
\newcommand{\dotmult}[1]{%
\,\begin{tikzpicture}[dotpic]
  \node [#1] (a) {};
  \draw [diredge] (a) -- (90:0.55);
  \draw [diredge] (a) (-45:0.6) -- (a);
  \draw [diredge] (a) (-135:0.6) -- (a);
\end{tikzpicture}\,}
\newcommand{\dotaction}[1]{%
\,\begin{tikzpicture}[dotpic,yshift=0.5mm]
  \node [#1] (a) {};
  \draw [diredge] (-90:0.55)--(a);
  \draw [diredge] (a) -- (45:0.6);
  \draw [rdiredge] (a) -- (135:0.6);
\end{tikzpicture}\,}
\newcommand{\dotcoaction}[1]{%
\,\begin{tikzpicture}[dotpic]
  \node [#1] (a) {};
  \draw [diredge] (a) -- (90:0.55);
  \draw [diredge] (a) (-45:0.6) -- (a);
  \draw [rdiredge] (a) (-135:0.6) -- (a);
\end{tikzpicture}\,}
\newcommand{\dotdualmult}[1]{%
\!\begin{tikzpicture}[dotpic]
    \node [style=white dot] (0) at (0, 0.3) {};
    \node [style=none] (1) at (-0.5, -0.4) {};
    \node [style=none] (2) at (0.5, -0.4) {};
    \node [style=none] (3) at (0, 0.8) {};
    \draw [style=diredge] (3.center) to (0);
    \draw [style=diredge, in=15, out=-30, looseness=1.50] (0) to (1.center);
    \draw [style=diredge, in=165, out=-150, looseness=1.50] (0) to (2.center);
\end{tikzpicture}\!}
\newcommand{\dotnorm}[1]{%
\,\begin{tikzpicture}[dotpic,yshift=0.4mm]
    \node [style=none] (0) at (0, -0.4) {};
    \node [style=#1] (1) at (0, -0) {};
    \node [style=none] (2) at (0, 0.5) {};
    \draw (0.center) to (1);
    \draw [style=diredge] (1) to (2.center);
\end{tikzpicture}\,}
\newcommand{\dotconorm}[1]{%
\,\begin{tikzpicture}[dotpic,yshift=0.4mm]
    \node [style=none] (0) at (0, -0.4) {};
    \node [style=white norm] (1) at (0, 0.1) {};
    \node [style=none] (2) at (0, 0.5) {};
    \draw [style=diredge] (1) to (0.center);
    \draw (2.center) to (1);
\end{tikzpicture}\,}
\newcommand{\blackunit}{\dotunit{dot}}
\newcommand{\blackmult}{\dotmult{dot}}
\newcommand{\blackcomult}{\dotcomult{dot}}
\newcommand{\blackaction}{\dotaction{dot}}
\newcommand{\whiteunit}{\dotunit{white dot}}
\newcommand{\whitecounit}{\dotcounit{white dot}}
\newcommand{\whitemult}{\dotmult{white dot}}
\newcommand{\whitecomult}{\dotcomult{white dot}}
\newcommand{\whitenorm}{\dotnorm{white norm}}
\newcommand{\whiteconorm}{\dotconorm{white norm}}
\newcommand{\whiteaction}{\dotaction{white dot}}
\newcommand{\whitecoaction}{\dotcoaction{white dot}}
\newcommand{\whitedualmult}{\dotdualmult{white dot}}
\newcommand{\graycounit}{\dotcounit{gray dot}}
\newcommand{\graymult}{\dotmult{gray dot}}
\newcommand{\graynorm}{\dotnorm{gray norm}}
\newcommand{\grayaction}{\dotaction{gray dot}}
\newcommand{\graycoaction}{\dotcoaction{gray dot}}
\newcommand{\prodmult}[2]{\dotmult{#1}\!\!\!\!\dotmult{#2}}
\newcommand{\cat}[1]{\ensuremath{\mathbf{#1}}\xspace}
\newcommand{\FHilb}{\cat{FHilb}}
\newcommand{\Rel}{\cat{Rel}}
\newcommand{\Stoch}{\cat{Stoch}}
\newcommand{\CPs}{\ensuremath{\mathrm{CP}^*}\xspace}
\newcommand{\CPM}{\ensuremath{\mathrm{CPM}}\xspace}
\newcommand{\Split}{\ensuremath{\mathrm{Split}^\dag}\xspace}
\newcommand{\V}{\cat{V}}
\newcommand{\C}{\ensuremath{\mathbb{C}}}
\newcommand{\inprod}[2]{\ensuremath{\langle #1\,|\,#2 \rangle}}
\DeclareMathOperator{\Tr}{Tr}
\newcommand{\id}[1][]{\ensuremath{1_{#1}}}
\DeclareMathOperator{\Mor}{Mor}
\DeclareMathOperator{\dom}{dom}
\DeclareMathOperator{\cod}{cod}
\newcommand{\bra}[1]{\ensuremath{\langle #1 |}}
\newcommand{\ket}[1]{\ensuremath{| #1 \rangle}}
\theoremstyle{plain}
\newtheorem{theorem}{Theorem}[section]
\newtheorem{lemma}[theorem]{Lemma}
\newtheorem{proposition}[theorem]{Proposition}
\theoremstyle{definition}
\newtheorem{definition}[theorem]{Definition}
\newtheorem{remark}[theorem]{Remark}
\title{Categories of Quantum and Classical Channels \\{\Large (extended abstract)}}
\author{Bob Coecke\thanks{Supported by the John Templeton Foundation.} \qquad Chris Heunen\thanks{Supported by EPSRC Fellowship EP/L002388/1.} \qquad Aleks Kissinger\footnotemark[1]
  \institute{University of Oxford, Department of Computer Science} 
  \email{\{coecke,heunen,alek\}@cs.ox.ac.uk} }
\date{}
\begin{document}
\maketitle
\begin{abstract}
  We introduce the CP*--construction on a dagger compact closed
  category as a generalisation of Selinger's CPM--construction. While
  the latter takes a dagger compact closed category and forms its
  category of ``abstract matrix algebras'' and completely positive
  maps, the CP*--construction forms its category of ``abstract
  C*-algebras'' and completely positive maps. This analogy is
  justified 
  by the case of finite-dimensional Hilbert spaces, where the
  CP*--construction yields the category of finite-dimensional
  C*-algebras and completely positive maps. 
  
  The CP*--construction fully embeds Selinger's CPM--construction in
  such a way that the objects in the image of the embedding can be thought
  of as ``purely quantum'' state spaces. It also embeds the category
  of classical stochastic maps, whose image consists of ``purely
  classical'' state spaces. By allowing classical and quantum data to
  coexist, this provides elegant abstract notions of preparation,
  measurement, and more general quantum channels.
\end{abstract}

\section{Introduction}

One of the motivations driving categorical treatments of quantum
mechanics is to place classical and quantum systems on an equal
footing in a single category, so that one can study their interactions.
The main idea of categorical quantum
mechanics~\cite{abramskycoecke:cqm} is to fix a category (usually 
dagger compact) whose objects are thought of as state
spaces and whose morphisms are evolutions. There are two main
variations. 
\begin{itemize}
  \item ``Dirac style'': Objects form \textit{pure} state spaces, and
    isometric morphisms form \textit{pure}
    state evolutions. 
  \item ``von Neumann style'': Objects form spaces of \textit{mixed} states,
    and morphisms form \textit{mixed} quantum evolutions, also
    known as \emph{quantum channels}.
\end{itemize}

The prototypical example of a ``Dirac style'' category is $\FHilb$,
the category of finite-dimensional Hilbert spaces and linear maps. One
can pass to a ``von Neumann style category'' by considering the
C*-algebras of operators on these Hilbert spaces, with completely
positive maps between them.
Selinger's CPM--construction provides an abstract bridge from the
former to the latter~\cite{selinger:completelypositive}, turning
morphisms in $\V$ into quantum channels in $\CPM[\V]$. 
However, this passage loses the connection between quantum and classical
channels. For example, $\CPM[\FHilb]$ only includes objects corresponding
to the entire state space of some quantum system, whereas we would
often like to focus in subspaces corresponding to particular classical
contexts. There have been several proposals to remedy this situation, which
typically involve augmenting $\CPM[\V]$ with extra objects to carry this
classical structure. Section~\ref{sec:related} provides an overview.

This extended abstract\footnote{A long version of this extended abstract is
now available:~\cite{journalversion}.} introduces a new solution to this problem, called the
CP*--construction, which is closer in spirit to the study of quantum
information using C*-algebras (see e.g.~\cite{keyl:quantuminformation}). Rather
than freely augmenting a category of quantum data with classical objects, we
define a new category whose objects are ``abstract C*--algebras'', and whose
morphisms are the analogue of completely positive maps. It is then possible to
construct a full embedding of the category $\CPM[\V]$, whose image yields the
purely quantum channels. Furthermore, there exists another full embedding of
the category $\Stoch[\V]$ of classical stochastic maps, whose image yields the
purely classical channels. The remainder of the category $\CPs[\V]$ can be
interpreted as mixed classical/quantum state spaces, which carry partially
coherent quantum states, such as degenerate quantum measurement outcomes.
% We prove that $\CPs[\V]$ simultaneously generalises the category $\CPM[\V]$
% of ``quantum channels'' and the category $\Stoch[\V]$ of ``classical channels''
% by constructing the following structure-preserving functors between
% dagger compact categories.

\[\xymatrix@C+8ex@R-6ex{
  \V \ar[r]
  & \CPM[\V] \ar[r]^-{\text{\scriptsize{full, faithful}}} 
  & \CPs[\V]
  & \Stoch[\V] \ar[l]_-{\text{\scriptsize{full, faithful}}}
}\]

% In fact,
% objects in $\CPs[\V]$ can actually quite directly be regarded as
% ``abstract C*-algebras''. This direct interface with operator algebra
% has the major advantage that it enables importing
% ideas from algebraic quantum information theory.

The paper is structured as follows. Section~\ref{sec:FA} introduces
normalisable dagger Frobenius algebras. These form a crucial ingredient to the
CP*--construction, which is defined in Section~\ref{sec:CPs}.
Sections~\ref{sec:embedding} and~\ref{sec:stoch} then show that the
CP*--construction simultaneously generalises Selinger's CPM--construction,
consisting of ``quantum channels'', and the \Stoch--construction, consisting
of ``classical channels''. We then consider two examples:
Section~\ref{sec:Hilb} shows that $\CPs[\FHilb]$ is the category of finite-
dimensional C*-algebras and completely positive maps, and
Section~\ref{sec:Rel} shows that $\CPs[\Rel]$ is the category of (small)
groupoids and inverse-respecting relations. Section~\ref{sec:idempotents}
compares $\CPs[\V]$ with Selinger's extension of the CPM--construction using
split idempotents. Finally, Section~\ref{sec:future} discusses the many
possibilities this opens up. 

\subsection{Related Work}\label{sec:related}

Selinger introduced two approaches to add classical data to $\CPM[\V]$ by
either freely adding biproducts to $\CPM[\V]$ or freely splitting the 
$\dagger$-idempotents of $\CPM[\V]$~\cite{selinger:idempotents}. These
new categories are referred to as $\CPM[\V]^{\oplus}$ and
$\Split[\CPM[\V]]$, respectively.

When $\V = \FHilb$, both categories provide ``enough space'' to reason about
classical and quantum data, as any finite-dimensional C*--algebra can be
defined as a sum of matrix algebras (as in $\CPM[\FHilb]^{\oplus}$) or as a
certain orthogonal subspace of a larger matrix algebra (as in
$\Split[\CPM[\FHilb]]$). However, it is unclear whether the second
construction captures too much: its may contain many more objects than simply
mixtures of classical and quantum state
spaces~\cite[Remark~4.9]{selinger:idempotents}. On the other hand, when $\V
\neq \FHilb$, the category $\CPM[\FHilb]^{\oplus}$ may be too
small. That is, there may be interesting objects that are not just
sums of quantum objects. 

For this reason, it is interesting to study $\CPs[\V]$, as it lies between these
two constructions:
\[\xymatrix@C+8ex@R-6ex{
    \CPM[\V]^{\oplus} \ar[r]^-{\text{\scriptsize{full, faithful}}}
  & \CPs[\V] \ar[r]^-{\text{\scriptsize{full, faithful}}}
  & \Split[\CPM[\V]]
}\]

The first embedding is well-defined whenever \V has biproducts, and the second
when \V satisfies a technical axiom about square roots (see
Definition~\ref{def:alg-sq-roots}). In the former case, $\CPs[\V]$ inherits
biproducts from \V, so it is possible to lift the embedding of $\CPM[\V]$ by
the universal property of the free biproduct completion; this will be
proved in detail in a subsequent paper. In the latter case, one
can always construct the associated dagger-idempotent of an object in
$\CPs[\V]$, and (with the assumption from Definition~\ref{def:alg-sq-roots}),
the notions of complete positivity in $\CPs[\V]$ and $\Split[\CPM[\V]]$
coincide. We provide the details of this construction in
Section~\ref{sec:idempotents}.

A third approach by Coecke, Paquette, and Pavlovic is similar to ours in that
it makes use of commutative Frobenius algebras to represent classical
data~\cite{coeckepaquettepavlovic:structuralism}.
As in the previous two approaches, it takes $\CPM[\V]$ and freely adds
classical structure, this time by forming the comonad associated with a
particular commutative Frobenius algebra and taking the Kleisli
category. All such categories are then glued together by Grothendieck
construction. The CP*--construction was originally conceived as a way to
simplify this approach and overcome is limitations.

\section{Abstract C*-algebras}\label{sec:FA}

This section defines so-called normalisable dagger Frobenius algebras,
which will play a central role in the CP*-construction of the next
section. We start by recalling the notion of a Frobenius algebra.
For an introduction to dagger (compact) categories and their graphical
calculus, we refer to~\cite{abramskycoecke:cqm,selinger:graphical}.

\begin{definition}
  A \emph{Frobenius algebra} is an object $A$ in a monoidal category together
  with morphisms depicted as $\whitemult$, $\whiteunit$, $\whitecomult$
  and $\whitecounit$ on it satisfying the following diagrammatic equations.
  \ctikzfig{fa_axioms}
\end{definition}

Any Frobenius algebra defines a cap and a cup that satisfy the snake equation.
\ctikzfig{cap_cup}

% Often it will be convenient to reformulate the Frobenius identities using caps and cups.
% \ctikzfig{frobenius_cup_form}

\begin{definition}
  A Frobenius algebra is \emph{symmetric} when $\;%
\beginpgfgraphicnamed{symmetric_1}
\begin{tikzpicture}[dotpic]
	\begin{pgfonlayer}{nodelayer}
		\node [style=white dot] (0) at (-4.5, -0.75) {};
		\node [style=none] (1) at (-5, -0.25) {};
		\node [style=none] (2) at (-4, -0.25) {};
		\node [style=none] (3) at (-4, 0.75) {};
		\node [style=none] (4) at (-5, 0.75) {};
		\node [style=none] (5) at (-2, 0.75) {};
		\node [style=white dot] (6) at (-1.5, -0.75) {};
		\node [style=none] (7) at (-1, -0) {};
		\node [style=none] (8) at (-1, 0.75) {};
		\node [style=none] (9) at (-2, -0) {};
		\node [style=none] (10) at (-3, -0) {$=$};
	\end{pgfonlayer}
	\begin{pgfonlayer}{edgelayer}
		\draw [in=-90, out=154] (0) to (1.center);
		\draw [style=diredge, in=-90, out=90] (1.center) to (3.center);
		\draw [in=-90, out=30] (0) to (2.center);
		\draw [style=diredge, in=-90, out=90] (2.center) to (4.center);
		\draw [in=-90, out=154] (6) to (9.center);
		\draw [style=diredge, in=-90, out=90, looseness=0.75] (9.center) to (5.center);
		\draw [in=-90, out=30] (6) to (7.center);
		\draw [style=diredge, in=-90, out=90, looseness=0.75] (7.center) to (8.center);
	\end{pgfonlayer}
\end{tikzpicture}}
\endpgfgraphicnamed\;$
  and $\;%
\beginpgfgraphicnamed{symmetric_2}
\begin{tikzpicture}[dotpic]
	\begin{pgfonlayer}{nodelayer}
		\node [style=white dot] (0) at (2.25, 0.75) {};
		\node [style=none] (1) at (1.75, -0.75) {};
		\node [style=none] (2) at (1.75, 0.25) {};
		\node [style=none] (3) at (2.75, 0.25) {};
		\node [style=none] (4) at (3.75, -0) {$=$};
		\node [style=none] (5) at (5.75, -0) {};
		\node [style=none] (6) at (4.75, -0) {};
		\node [style=none] (7) at (5.75, -0.75) {};
		\node [style=none] (8) at (2.75, -0.75) {};
		\node [style=white dot] (9) at (5.25, 0.75) {};
		\node [style=none] (10) at (4.75, -0.75) {};
	\end{pgfonlayer}
	\begin{pgfonlayer}{edgelayer}
		\draw [style=rdiredge, in=90, out=-154] (0) to (2.center);
		\draw [in=90, out=-90] (2.center) to (8.center);
		\draw [style=rdiredge, in=90, out=-30] (0) to (3.center);
		\draw [in=90, out=-90] (3.center) to (1.center);
		\draw [style=rdiredge, in=90, out=-154] (9) to (6.center);
		\draw [in=90, out=-90, looseness=0.75] (6.center) to (10.center);
		\draw [style=rdiredge, in=90, out=-30] (9) to (5.center);
		\draw [in=90, out=-90, looseness=0.75] (5.center) to (7.center);
	\end{pgfonlayer}
\end{tikzpicture}}
\endpgfgraphicnamed\;$.
\end{definition}

% \begin{proposition}
%   Any symmetric Frobenius algebra satisfies $\;\tikzfig{left_right_trace}\;$.
% \end{proposition}
% \begin{proof}
%   Symmetry can be used to interchange traces with Frobenius caps and cups.
%   \[\tikzfig{left_right_trace_pf}\qedhere\]  
% \end{proof}

In a dagger compact category $\cat{V}$, a \emph{dagger Frobenius}
algebra is a Frobenius algebra that satisfies $\whitecomult = (\whitemult)^\dag$ and
$\whitecounit = (\whiteunit)^\dag$. Their import for us starts with
the following theorem.

\begin{theorem}[\cite{vicary:quantumalgebras}]\label{thm:dfa-cstar}
  Given a dagger Frobenius algebra $(A,\whitemult)$ in \FHilb, the
  following operation gives $A$ the structure of a C*-algebra.
  \ctikzfig{star_operation}
  Furthermore, all finite-dimensional C*-algebras arise this way.
  \qed
\end{theorem}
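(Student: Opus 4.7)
The statement has two halves: (i) starting from a dagger Frobenius algebra in $\FHilb$, the star operation yields a C*-algebra structure on $A$; and (ii) every finite-dimensional C*-algebra arises this way. I plan to attack these in order.

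For part (i), the multiplication $\whitemult$ and unit $\whiteunit$ already give an associative unital algebra on $A$, directly by the Frobenius axioms. The involution is built from the cap of the Frobenius algebra composed with the dagger, and is antilinear since $\dagger$ in $\FHilb$ is antilinear; involutivity $a^{**}=a$ I would read off from the snake equation for the induced cap and cup. The substantive algebraic check is antimultiplicativity $(ab)^{*}=b^{*}a^{*}$: I would establish this by a purely diagrammatic calculation that uses the Frobenius law to slide a cap past the multiplication, together with the identity $\whitecomult=(\whitemult)^{\dagger}$ to interchange the direction of each leg. For the analytic content I would equip $A$ with the operator norm coming from its left-regular representation $A\hookrightarrow B(A)$, where $B(A)$ denotes the bounded operators on the Hilbert space underlying $A$. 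This embedding is automatically a $*$-homomorphism, it is injective because $\whitecounit$ detects the unit, and so the C*-identity is inherited from $B(A)$.

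For part (ii), the plan is to reduce to matrix algebras by Artin--Wedderburn and then take direct sums. On each $M_n(\C)$, the obvious multiplication $m$ and unit $u\colon\lambda\mapsto\lambda\,\mathbf{1}$ are the candidates; I would equip $M_n(\C)$ with an appropriately scaled Hilbert--Schmidt inner product so that $u^{\dagger}$ is a genuine two-sided counit with respect to $m^{\dagger}$, and then verify the Frobenius law by a direct matrix-entry calculation in a matrix-unit basis. A separate check confirms that the involution constructed in part (i) from this dagger Frobenius structure reduces to the usual conjugate-transpose, so one really does recover the original C*-algebra. Finally, an orthogonal direct sum of dagger Frobenius algebras in $\FHilb$ is again a dagger Frobenius algebra, which promotes the matrix-algebra case to any $\bigoplus_i M_{n_i}(\C)$.

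The main obstacle I expect is twofold. Diagrammatically, antimultiplicativity of the involution is the most delicate point: it uses the Frobenius law in an essential way and is sensitive to the precise definition of the cap, so the graphical bookkeeping must be done with care. Analytically, in part (ii) the inner product one should choose on each matrix block is not unique; the counit axiom pins down a specific scaling of the trace, and a small argument is needed that varying this scaling changes the Frobenius data but not the resulting C*-algebra, so that every finite-dimensional C*-algebra does indeed appear.
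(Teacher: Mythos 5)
\noindent The paper offers no proof of this theorem---it is imported verbatim from Vicary's work on categorical quantum algebras (the \texttt{\char`\\qed} follows the statement immediately)---and your outline is essentially the argument given in that reference: realise $A$ as a $*$-closed subalgebra of $B(A)$ via the left regular representation for one direction, and equip each Wedderburn block $M_{n_i}(\C)$ with a Hilbert--Schmidt-type inner product for the other. One small streamlining worth noting: once you establish the single identity that left multiplication by $a^*$ is the adjoint of left multiplication by $a$ (together with faithfulness of the representation, which follows from unitality), both involutivity $a^{**}=a$ and antimultiplicativity $(ab)^*=b^*a^*$ come for free, so the delicate diagrammatic bookkeeping you flag can be localised to that one computation; likewise, in part (ii) the scaling of the trace is a non-issue, since the Frobenius and counit axioms are invariant under positive rescaling of the inner product on each block.
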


In light of this theorem, one might be tempted to consider
dagger Frobenius algebras to be the ``correct'' way to define the
abstract analogue of finite-dimensional C*-algebras. However, there
is one more condition, called \textit{normalisability}, that is
satisfied by \textit{all} dagger Frobenius algebras in \FHilb, yet
not by dagger Frobenius algebras in general. Before we come to that,
we introduce the notion of a central map for a monoid.

\begin{definition}
  A map $z : A \rightarrow A$ is \textit{central} for a monoid when
  $\whitemult \circ (z \otimes 1_A) = z \circ \whitemult = \whitemult \circ (1_A \otimes z)$.
\end{definition}

We call such a map central, because it corresponds uniquely to a point $p_z : I \rightarrow A$ in the centre of the monoid via left (or equivalently right) multiplication by $p_z$.
  
Recall that a map $g :A \rightarrow A$ in a dagger category is
\textit{positive} when $g = h^\dag \circ h$ for some map $h$. It is
called \textit{positive definite} when it is a positive
isomorphism. Using these conditions, we give the definition of a 
normalisability as a well-behavedness property of the ``loop'' map
$\whitemult \circ \whitecomult$. 

\begin{definition}\label{def:dnfa}
  A dagger Frobenius algebra is \emph{normalisable} when there is a central,
  positive-definite map $z$, called the \emph{normaliser} and depicted
  as $\whitenorm$, satisfying the following diagrammatic equation.
  \ctikzfig{normalisable}
\end{definition}

\emph{Special} dagger Frobenius algebras are normalisable dagger
Frobenius algebra where $z^2 = 1_A$. %As in that special case, 
Normalisable dagger Frobenius algebras are always symmetric.  

\begin{theorem}
  Normalisable dagger Frobenius algebras are symmetric.
\end{theorem}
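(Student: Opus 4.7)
The plan is to reduce both defining symmetry equations to the centrality of the normaliser $\whitenorm$. First observe that in a dagger Frobenius algebra the two equations defining a symmetric Frobenius algebra are related by applying the dagger: since $\whitecomult = (\whitemult)^\dag$ and $\whitecounit = (\whiteunit)^\dag$, taking the dagger of one equation turns the cap into a cup and produces the other. It therefore suffices to prove the cap version, namely $\whitecap \circ \sigma_{A,A} = \whitecap$, where $\sigma$ is the braiding.

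Non-degeneracy of the Frobenius pairing $\whitecap$ yields a unique isomorphism $\nu : A \to A$, the Nakayama automorphism, satisfying $\whitecap \circ \sigma_{A,A} = \whitecap \circ (\nu \otimes 1_A)$. The cap symmetry is then equivalent to $\nu = 1_A$. To compute $\nu$ I would unfold its definition graphically by inserting a cup--cap snake and then repeatedly applying the Frobenius law to slide the braiding through the multiplication. The resulting expression is an endomorphism of $A$ built from the loop $L := \whitemult \circ \whitecomult$; concretely, $\nu$ acts on an input by multiplying by $L$ on one side and by $L^{-1}$ on the other.

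Finally, invoke normalisability. The equation in Definition~\ref{def:dnfa} expresses the loop $L$ in terms of the normaliser $\whitenorm$, which is central by hypothesis (and positive-definite, hence invertible). Consequently $L$ itself is central, so multiplying by $L$ on one side and $L^{-1}$ on the other is the identity, giving $\nu = 1_A$. This establishes the cap symmetry equation; taking the dagger yields the cup version.

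The main obstacle is the graphical calculation in the second paragraph, namely showing that $\nu$ reduces to ``multiplication by $L$ on one side, $L^{-1}$ on the other''. Making this precise requires careful use of the Frobenius law and the snake equation, in particular ensuring that the loop ends up in a position where centrality can be invoked rather than in a position where one only has the symmetry condition itself. Once that rewriting is in place, centrality of $\whitenorm$ finishes the proof mechanically.
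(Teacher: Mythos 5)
Your opening reduction is fine: for a dagger Frobenius algebra the two symmetry equations are indeed daggers of each other, and recasting symmetry as triviality of the Nakayama automorphism $\nu$ is a legitimate strategy. The proof breaks at the step you yourself flag as the main obstacle: the claim that $\nu$ acts by ``multiplying by $L$ on one side and $L^{-1}$ on the other'' with $L = \whitemult \circ \whitecomult$. This cannot be the right formula. By the Frobenius law alone, $\whitecomult\circ a = (a\otimes \id)\circ\whitecomult\circ\whiteunit = (\id\otimes a)\circ\whitecomult\circ\whiteunit$ (viewing $a$ as left/right multiplication), so $L(a) = a\cdot c = c\cdot a$ where $c = L(\whiteunit)$ is central: the loop is a central map in \emph{every} Frobenius algebra, with no normalisability needed. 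Consequently your final step uses nothing beyond what holds generically, and the argument would prove that every Frobenius algebra with invertible loop is symmetric. That is false: take $\mathbb{M}_n(\C)$ with counit $\varepsilon(a)=\Tr(ha)$ for a positive invertible non-central $h$; the loop is the invertible scalar $\Tr(h^{-1})$ times the identity, yet the pairing $\Tr(hab)$ is not symmetric and $\nu$ is conjugation by $h$. So the Nakayama automorphism is simply not a function of $L$, and no amount of care with the graphical rewriting will rescue the formula.

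The missing ingredient is the normalisability equation itself, which is never actually invoked in your sketch (you use only centrality of the normaliser, which, as above, buys nothing new). That equation relates the counit to the \emph{ambient} compact structure: it lets one expand $\whitecounit$ as the partial trace of $\whitemult$ precomposed with $(\whitenorm)^2$. This is exactly what the paper's proof does: expand the counit in the expression $\whitecounit\circ\whitemult$ via the normaliser, use centrality of $\whitenorm$ to slide it past the multiplication, and then apply cyclicity of the categorical trace to exchange the two arguments, yielding the symmetry equation directly. If you want to keep the Nakayama framing, you would need to show that normalisability forces $\nu=\id$ by a computation that genuinely passes through the categorical trace, not through the (always central) loop.
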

\begin{proof}
  The proof follows from expanding the counit and applying cyclicity of the trace $(*)$.
  \[%
\beginpgfgraphicnamed{norm_symmetric_pf}
\begin{tikzpicture}[dotpic]
	\begin{pgfonlayer}{nodelayer}
		\node [style=white dot] (0) at (-6.75, 1) {};
		\node [style=white norm] (1) at (-7, 0.25) {};
		\node [style=white norm] (2) at (-7, -0.25) {};
		\node [style=none] (3) at (-6.25, 0.5) {};
		\node [style=none] (4) at (-6.25, 1.75) {};
		\node [style=none] (5) at (-8.25, -0.25) {$=$};
		\node [style=white dot] (6) at (-9.75, 1) {};
		\node [style=white dot] (7) at (-9.75, 0.25) {};
		\node [style=none] (8) at (-10.5, -1) {};
		\node [style=none] (9) at (-9, -1) {};
		\node [style=white dot] (10) at (-7, -1) {};
		\node [style=none] (11) at (-7.75, -2.25) {};
		\node [style=none] (12) at (-6.25, -2.25) {};
		\node [style=none] (13) at (-3.25, -2) {};
		\node [style=none] (14) at (-3, 1.75) {};
		\node [style=none] (15) at (-4.5, -2) {};
		\node [style=none] (16) at (-3, 0.5) {};
		\node [style=none] (17) at (-5.5, -0.25) {$=$};
		\node [style=white dot] (18) at (-3.5, 1) {};
		\node [style=white dot] (19) at (-4, 0) {};
		\node [style=white norm] (20) at (-4.5, -0.75) {};
		\node [style=white norm] (21) at (-4.5, -1.25) {};
		\node [style=none] (22) at (0.75, -1) {};
		\node [style=none] (23) at (-0.75, -2) {};
		\node [style=none] (24) at (-2.5, -0.25) {$=$};
		\node [style=white norm] (25) at (-1.5, -0.75) {};
		\node [style=white norm] (26) at (-1.5, -1.25) {};
		\node [style=none] (27) at (0, 2) {};
		\node [style=white dot] (28) at (0, 0) {};
		\node [style=white dot] (29) at (-0.5, 1) {};
		\node [style=none] (30) at (-1.5, -2) {};
		\node [style=white dot] (31) at (2.75, 1.25) {};
		\node [style=none] (32) at (3.25, -2.25) {};
		\node [style=white norm] (33) at (2.25, -1.5) {};
		\node [style=white dot] (34) at (3.25, 0.25) {};
		\node [style=none] (35) at (1.75, -0.25) {$=$};
		\node [style=none] (36) at (3.25, 2) {};
		\node [style=none] (37) at (2.25, -2.25) {};
		\node [style=white norm] (38) at (2.25, -1) {};
		\node [style=none] (39) at (3.75, -0.5) {};
		\node [style=none] (40) at (7.5, 0.5) {};
		\node [style=none] (41) at (7.25, -2.25) {};
		\node [style=none] (42) at (4.75, -0.25) {$=$};
		\node [style=none] (43) at (7.5, 1.75) {};
		\node [style=white norm] (44) at (6, -1.25) {};
		\node [style=white norm] (45) at (6, -1.75) {};
		\node [style=white dot] (46) at (6.5, 0.25) {};
		\node [style=none] (47) at (8.25, -0.25) {$=$};
		\node [style=white dot] (48) at (7, 1) {};
		\node [style=none] (49) at (6, -2.25) {};
		\node [style=none] (50) at (11, -0.25) {$=$};
		\node [style=none] (51) at (10, -2.25) {};
		\node [style=none] (52) at (10.25, 2) {};
		\node [style=none] (53) at (9, -2.25) {};
		\node [style=none] (54) at (10.25, 0.75) {};
		\node [style=white dot] (55) at (9.75, 1.25) {};
		\node [style=white dot] (56) at (9.5, -0.75) {};
		\node [style=white norm] (57) at (9.5, 0.5) {};
		\node [style=white norm] (58) at (9.5, 0) {};
		\node [style=white dot] (59) at (12.5, 0.25) {};
		\node [style=none] (60) at (12, -1.5) {};
		\node [style=none] (61) at (13, -1.5) {};
		\node [style=white dot] (62) at (12.5, 1) {};
		\node [style=none] (63) at (1.75, 0.5) {\small $(*)$};
	\end{pgfonlayer}
	\begin{pgfonlayer}{edgelayer}
		\draw [style=diredge, in=180, out=90] (0) to (4.center);
		\draw [style=diredge, in=-117, out=90] (1) to (0);
		\draw (2) to (1);
		\draw [style=diredge, in=-75, out=180] (3.center) to (0);
		\draw [in=0, out=0, looseness=1.50] (4.center) to (3.center);
		\draw [style=diredge] (7) to (6);
		\draw [style=diredge, bend right] (9.center) to (7);
		\draw [style=diredge, bend left] (8.center) to (7);
		\draw [style=diredge, bend right] (12.center) to (10);
		\draw [style=diredge, bend left] (11.center) to (10);
		\draw [style=diredge] (10) to (2);
		\draw [style=diredge, in=180, out=90] (18) to (14.center);
		\draw (21) to (20);
		\draw [style=diredge, in=-75, out=180] (16.center) to (18);
		\draw [in=0, out=0, looseness=1.50] (14.center) to (16.center);
		\draw [style=diredge, in=-45, out=90] (13.center) to (19);
		\draw [style=diredge, in=-135, out=90] (19) to (18);
		\draw [style=diredge, in=-150, out=90] (20) to (19);
		\draw [style=diredge] (15.center) to (21);
		\draw [style=diredge, in=180, out=90] (29) to (27.center);
		\draw (26) to (25);
		\draw [in=0, out=0, looseness=0.75] (27.center) to (22.center);
		\draw [style=diredge] (30.center) to (26);
		\draw [style=diredge, in=-75, out=180] (22.center) to (28);
		\draw [style=diredge, in=-45, out=90] (28) to (29);
		\draw [style=diredge, in=-135, out=90, looseness=0.75] (25) to (29);
		\draw [style=diredge, in=-150, out=90, looseness=0.75] (23.center) to (28);
		\draw [style=diredge, in=180, out=90, looseness=1.25] (31) to (36.center);
		\draw (33) to (38);
		\draw [in=0, out=0, looseness=0.75] (36.center) to (39.center);
		\draw [style=diredge] (37.center) to (33);
		\draw [style=diredge, in=-75, out=180] (39.center) to (34);
		\draw [style=diredge, in=-45, out=90] (34) to (31);
		\draw [style=diredge, in=-153, out=90] (38) to (34);
		\draw [style=diredge, in=-135, out=60, looseness=0.75] (32.center) to (31);
		\draw [style=diredge, in=180, out=90] (48) to (43.center);
		\draw (45) to (44);
		\draw [style=diredge, in=-75, out=180] (40.center) to (48);
		\draw [in=0, out=0, looseness=1.50] (43.center) to (40.center);
		\draw [style=diredge, in=-150, out=75, looseness=1.25] (41.center) to (46);
		\draw [style=diredge, in=-135, out=90] (46) to (48);
		\draw [style=diredge, in=-45, out=90, looseness=1.25] (44) to (46);
		\draw [style=diredge] (49.center) to (45);
		\draw [style=diredge, in=180, out=90] (55) to (52.center);
		\draw [style=diredge, in=-117, out=90] (57) to (55);
		\draw (58) to (57);
		\draw [style=diredge, in=-75, out=180] (54.center) to (55);
		\draw [in=0, out=0, looseness=1.50] (52.center) to (54.center);
		\draw [style=diredge, in=-150, out=60, looseness=1.50] (51.center) to (56);
		\draw [style=diredge, in=-30, out=120, looseness=1.50] (53.center) to (56);
		\draw [style=diredge] (56) to (58);
		\draw [style=diredge, in=-150, out=60, looseness=1.50] (61.center) to (59);
		\draw [style=diredge, in=-30, out=120, looseness=1.50] (60.center) to (59);
		\draw [style=diredge] (59) to (62);
	\end{pgfonlayer}
\end{tikzpicture}}
\endpgfgraphicnamed\qedhere\]
\end{proof}

\begin{definition}\label{def:pos-dim}
  An object $X$ in a dagger compact category is
  \textit{positive-dimensional} if there is a positive definite scalar
  $z$ satisfying $\;\begin{aligned}%
\beginpgfgraphicnamed{pos_dim}
\begin{tikzpicture}[dotpic, font=\scriptsize]
	\begin{pgfonlayer}{nodelayer}
		\node [style=none] (0) at (-3, 2.25) {};
		\node [style=none] (1) at (-3, 0) {};
		\node [style=none] (2) at (0, 1.75) {};
		\node [style=none] (3) at (0.75, 1) {};
		\node [style=none] (4) at (0, 0.25) {};
		\node [style=none] (5) at (-0.75, 1) {};
		\node [style=square box] (6) at (-1.75, 0.5) {$z$};
		\node [style=square box] (7) at (-1.75, 1.75) {$z$};
		\node [style=none] (8) at (2, 1) {$=$};
		\node [style=none] (9) at (3.5, 2.25) {};
		\node [style=none] (10) at (3.5, 0) {};
		\node [style=none] (11) at (0.75, 0.25) {$X$};
		\node [style=none] (12) at (4, 0.25) {$X$};
		\node [style=none] (13) at (-3.5, 0.25) {$X$};
	\end{pgfonlayer}
	\begin{pgfonlayer}{edgelayer}
		\draw [style=diredge] (1.center) to (0.center);
		\draw [in=90, out=0] (2.center) to (3.center);
		\draw [in=0, out=-90] (3.center) to (4.center);
		\draw [in=-90, out=180] (4.center) to (5.center);
		\draw [style=diredge, in=180, out=90] (5.center) to (2.center);
		\draw [style=diredge] (10.center) to (9.center);
	\end{pgfonlayer}
\end{tikzpicture}}
\endpgfgraphicnamed\end{aligned}$.
  A dagger compact closed category is called positive-dimensional if
  all its objects are.
\end{definition}

% Since scalar multiplication is commutative~\cite[Lemma~2]{abramskycoecke:cqm}, the above
% property is equivalent to $(z^2 \circ \Tr_X(\id[X])) \otimes \id[X] = \id[X]$.

\begin{proposition}\label{prop:CPMobs}
  For a positive-dimensional dagger compact closed category \V, every
  object of the form $A^* \otimes A$ carries a canonical
  normalisable dagger Frobenius algebra, given as follows.
  \ctikzfig{pants_alg}
\end{proposition}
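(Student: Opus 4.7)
The strategy is to unpack the pants construction diagrammatically and show (i) that it satisfies the Frobenius and dagger Frobenius axioms, and (ii) that its loop map reduces to a scalar multiple of the identity, where the scalar is the ``dimension'' of $A$ guaranteed to be positive-definite by the positive-dimensionality hypothesis.

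First I would spell out the structure: multiplication $m\colon (A^*\otimes A)\otimes(A^*\otimes A)\to A^*\otimes A$ contracts the middle $A\otimes A^*$ pair via the compact structure; the unit $u\colon I\to A^*\otimes A$ is the cup; and comultiplication and counit are defined as the daggers of $m$ and $u$. The (dagger) Frobenius axioms then become pure applications of the snake equations for the compact-closed structure, and so can be verified by sliding strings. This is entirely routine in the graphical calculus and I would not write out each diagram in detail; it is the well-known ``pair of pants'' algebra on an object and its dual.

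The heart of the proof is the normaliser. Composing the multiplication with the comultiplication on the pants algebra yields a diagram with a closed loop on $A$ parallel to an identity wire on $A^*\otimes A$; graphically, this is $m\circ\Delta = d_A \cdot 1_{A^*\otimes A}$, where $d_A\colon I\to I$ is the scalar loop on $A$ (i.e.\ the ``dimension'' of $A$). By the positive-dimensionality assumption (Definition~\ref{def:pos-dim}), this loop is a positive-definite scalar, so it has a positive-definite square root, or indeed a positive-definite preimage under $d_A\cdot(-)$. I would therefore define the normaliser $z\colon A^*\otimes A\to A^*\otimes A$ to be the scalar multiple of the identity determined by the inverse of $d_A$ (or, depending on the precise form of the equation in Definition~\ref{def:dnfa}, by $d_A$ itself). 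Centrality is immediate because $z$ is a scalar times the identity, and it commutes with any morphism; positive-definiteness of $z$ follows from that of $d_A$ since positive-definite scalars are closed under inverses and tensor with the identity.

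Finally I would verify the diagrammatic normalisability equation by substituting $z$ into the pants loop diagram: since $m\circ\Delta$ is $d_A$ times the identity and $z$ is its inverse multiple of the identity, the defining equation reduces to the snake/identity equations on $A^*\otimes A$. The main potential obstacle is bookkeeping around orientations and the precise form of the normalisability axiom: one must check that the chosen scalar $z$ matches the side on which the normaliser is applied, and that the graphical equation holds on the nose. However, since everything in sight is a scalar multiple of the identity, this reduces to a single scalar identity in the endomorphism monoid of $I$, which is forced by the definition of $d_A$.
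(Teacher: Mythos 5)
Your proposal follows essentially the same route as the paper: the Frobenius axioms are immediate from compact closure, and the normaliser is a scalar multiple of the identity on $A^*\otimes A$ obtained from positive-dimensionality. The one place to be careful is the choice of scalar: because the normalisability equation of Definition~\ref{def:dnfa} involves the normaliser \emph{squared}, the correct normaliser is $\id[A^*\otimes A]\otimes z$ where $z$ is the positive-definite scalar satisfying $(z^2\circ \Tr_A(\id[A]))\otimes\id[A]=\id[A]$ --- which is precisely what Definition~\ref{def:pos-dim} supplies --- rather than ``the inverse of $d_A$'' itself; note also that positive-dimensionality does not assert that the loop scalar is positive-definite or invertible, only that such a $z$ exists. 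With that choice the verification collapses to a single scalar identity, exactly as you describe.
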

\begin{proof}
  The Frobenius axioms follow immediately from compact closure. 
  By positive-dimensionality, there exists a positive-definite scalar
  $z$ such that $(z^2 \circ \Tr_A(\id[A])) \otimes \id[A] = \id[A]$. It is then
  possible to show that $\id[A^*\otimes A] \otimes z$ is the
  normaliser. 
  % \[\tikzfig{pants_norm}\qedhere\]
\end{proof}

From now on, we will always take \V to be positive-dimensional.% , and
% write $L(A)$ for the algebra of the previous proposition.

\section{The CP*--construction}\label{sec:CPs}
This section defines the CP*--construction. In fact, defining it is
easy; most work goes into proving that it yields a dagger compact
category. For the definition we need some (graphical) notation,
generalising the left and right regular actions $A \mapsto
\mathop{End}(A)$ given by $x \mapsto x \cdot (-)$ and $x \mapsto 
(-) \cdot x$ for a finite-dimensional algebra $A$. More 
generally, for a monoid $(A,\whitemult)$ in a compact category, define its left
and right action maps $A \to A^* \otimes A$ as follows.
\ctikzfig{action_maps}
Similarly, we can define for any comonoid left and right coaction
maps. By convention, we work primarily with right action and coaction
maps, and express them more succinctly as follows.
\ctikzfig{action_abbrev}

% Given Frobenius algebras $(A, \whitemult)$, $(B, \graymult)$, we can lift any morphism $f : A \rightarrow B$ to a morphism $\Phi_f : A^* \otimes A \rightarrow B^* \otimes B$ by applying action and coaction maps.
% \ctikzfig{action_lift}

\begin{definition}
  Let \V be a dagger compact category. Objects of $\CPs[\V]$ are
  normalizable dagger Frobenius algebras in \V. Morphisms from
  $(A,\whitemult)$ to $(B,\graymult)$ in $\CPs[\V]$ are morphisms $f
  \colon A \to B$ in \V such that there exists an object $X$ and $g \colon
  A \rightarrow X \otimes B$ in \V satisfying the following equation.
  \begin{equation}\label{eq:cpstar-condition}
\beginpgfgraphicnamed{cpstar_condition}
\begin{tikzpicture}[dotpic]
	\begin{pgfonlayer}{nodelayer}
		\node [style=none] (0) at (0, -0) {$=$};
		\node [style=square box] (1) at (-2, -0) {$f$};
		\node [style=gray dot] (2) at (-2, 1) {};
		\node [style=none] (3) at (-1.25, 2) {};
		\node [style=none] (4) at (-2.75, 2) {};
		\node [style=white dot] (5) at (-2, -1) {};
		\node [style=none] (6) at (-1.25, -2) {};
		\node [style=none] (7) at (-2.75, -2) {};
		\node [style=square box, minimum width=1 cm, minimum height=0.75 cm] (8) at (2, -0) {$g_*$};
		\node [style=none] (9) at (4.5, -0.75) {};
		\node [style=none] (10) at (2, -0.75) {};
		\node [style=none] (11) at (2, -1.75) {};
		\node [style=none] (12) at (4.5, -1.75) {};
		\node [style=none] (13) at (5, 0.75) {};
		\node [style=none] (14) at (1.5, 0.75) {};
		\node [style=none] (15) at (5, 2) {};
		\node [style=none] (16) at (1.5, 2) {};
		\node [style=square box, minimum width=1 cm, minimum height=0.75 cm] (17) at (4.5, -0) {$g$};
		\node [style=none] (18) at (2.5, 0.75) {};
		\node [style=none] (19) at (4, 0.75) {};
	\end{pgfonlayer}
	\begin{pgfonlayer}{edgelayer}
		\draw [style=diredge] (1) to (2);
		\draw [style=diredge, bend right] (4.center) to (2);
		\draw [style=diredge, bend right] (2) to (3.center);
		\draw [style=diredge, bend right] (5) to (7.center);
		\draw [style=diredge, bend right] (6.center) to (5);
		\draw [style=diredge] (5) to (1);
		\draw [style=diredge] (12.center) to (9.center);
		\draw [style=diredge] (13.center) to (15.center);
		\draw [style=diredge] (10.center) to (11.center);
		\draw [style=diredge] (16.center) to (14.center);
		\draw [style=diredge, in=90, out=90, looseness=1.50] (19.center) to (18.center);
	\end{pgfonlayer}
\end{tikzpicture}}
\endpgfgraphicnamed
  \end{equation}
  Composition and identities are inherited from \V.
\end{definition}

Equation (\ref{eq:cpstar-condition}) is called the
\emph{CP*--condition}. We first establish that $\CPs[\V]$ is a
well-defined category. 

\begin{lemma}\label{lem:frob-actions}
  Any symmetric Frobenius algebra satisfies $\;%
\beginpgfgraphicnamed{sfa_ident}
\begin{tikzpicture}[dotpic]
	\begin{pgfonlayer}{nodelayer}
		\node [style=white dot] (0) at (-1.25, 0.5) {};
		\node [style=white dot] (1) at (-1.25, -0.25) {};
		\node [style=none] (2) at (-1.75, -0.75) {};
		\node [style=none] (3) at (-0.75, -0.75) {};
		\node [style=none] (4) at (-1.75, 1) {};
		\node [style=none] (5) at (-0.75, 1) {};
		\node [style=none] (6) at (2.5, 1) {};
		\node [style=none] (7) at (2.5, -0.75) {};
		\node [style=white dot] (8) at (1.25, -0) {};
		\node [style=none] (9) at (1.25, -0.75) {};
		\node [style=white dot] (10) at (2.5, -0) {};
		\node [style=none] (11) at (1.25, 1) {};
		\node [style=none] (12) at (0, -0) {$=$};
	\end{pgfonlayer}
	\begin{pgfonlayer}{edgelayer}
		\draw [style=diredge, bend right=15] (4.center) to (0);
		\draw [style=diredge, bend right=15] (0) to (5.center);
		\draw [style=diredge, bend right=15] (3.center) to (1);
		\draw [style=diredge, bend right=15] (1) to (2.center);
		\draw [style=diredge] (1) to (0);
		\draw [style=diredge] (11.center) to (8);
		\draw [style=diredge] (8) to (9.center);
		\draw [style=diredge] (7.center) to (10);
		\draw [style=diredge] (10) to (6.center);
		\draw [style=diredge, bend right=60, looseness=1.25] (10) to (8);
	\end{pgfonlayer}
\end{tikzpicture}}
\endpgfgraphicnamed\;$.
\end{lemma}
\begin{proof}
  We can use symmetry to prove the result.
  \[%
\beginpgfgraphicnamed{sfa_ident_pf}
\begin{tikzpicture}[dotpic]
	\begin{pgfonlayer}{nodelayer}
		\node [style=white dot] (0) at (-8.5, 0.5) {};
		\node [style=white dot] (1) at (-8.5, -0.5) {};
		\node [style=none] (2) at (-9, -1.25) {};
		\node [style=none] (3) at (-8, -1.25) {};
		\node [style=none] (4) at (-9, 1.25) {};
		\node [style=none] (5) at (-8, 1.25) {};
		\node [style=none] (6) at (-7.5, -0) {$=$};
		\node [style=none] (7) at (-6.75, -1.25) {};
		\node [style=none] (8) at (-6, -1.25) {};
		\node [style=white dot] (9) at (-6, 0.5) {};
		\node [style=white dot] (10) at (-6, -0.5) {};
		\node [style=none] (11) at (-6, 1.25) {};
		\node [style=none] (12) at (-6.75, 1.25) {};
		\node [style=none] (13) at (-5, -0) {$=$};
		\node [style=white dot] (14) at (-3, -0.75) {};
		\node [style=none] (15) at (-3.75, 1.5) {};
		\node [style=none] (16) at (-4.5, -1.5) {};
		\node [style=none] (17) at (-2, 1.5) {};
		\node [style=none] (18) at (-3, -1.5) {};
		\node [style=white dot] (19) at (-2.5, 0.5) {};
		\node [style=white dot] (20) at (-3, 1.25) {};
		\node [style=none] (21) at (-4, 0.25) {};
		\node [style=none] (22) at (-0.5, 1.5) {};
		\node [style=none] (23) at (-1.25, -1.5) {};
		\node [style=none] (24) at (1.25, 1.5) {};
		\node [style=none] (25) at (-0.75, 0.5) {};
		\node [style=none] (26) at (-1.75, -0) {$=$};
		\node [style=white dot] (27) at (0.5, -0.75) {};
		\node [style=white dot] (28) at (0, -0) {};
		\node [style=none] (29) at (0.5, -1.5) {};
		\node [style=white dot] (30) at (0.25, 1.25) {};
		\node [style=white dot] (31) at (3.25, -0) {};
		\node [style=none] (32) at (1.75, -0) {$=$};
		\node [style=none] (33) at (2.75, 0.5) {};
		\node [style=white dot] (34) at (3.75, -0.75) {};
		\node [style=none] (35) at (4.5, 1.5) {};
		\node [style=none] (36) at (5, 1.5) {};
		\node [style=white dot] (37) at (3.75, 1.25) {};
		\node [style=none] (38) at (2.25, -1.5) {};
		\node [style=none] (39) at (3.75, -1.5) {};
		\node [style=none] (40) at (7.5, 1.5) {};
		\node [style=white dot] (41) at (7.25, -0.75) {};
		\node [style=none] (42) at (5.75, -1.5) {};
		\node [style=none] (43) at (8.5, 1.5) {};
		\node [style=white dot] (44) at (6.75, 0.5) {};
		\node [style=none] (45) at (5.25, -0) {$=$};
		\node [style=none] (46) at (6.25, 1) {};
		\node [style=none] (47) at (7.25, -1.5) {};
		\node [style=none] (48) at (8.75, -0) {$=$};
		\node [style=none] (49) at (10.75, -1.25) {};
		\node [style=none] (50) at (9.5, 1.25) {};
		\node [style=none] (51) at (10.75, 1.25) {};
		\node [style=white dot] (52) at (9.5, -0) {};
		\node [style=none] (53) at (9.5, -1.25) {};
		\node [style=white dot] (54) at (10.75, -0) {};
	\end{pgfonlayer}
	\begin{pgfonlayer}{edgelayer}
		\draw [style=diredge, bend right=15] (4.center) to (0);
		\draw [style=diredge, bend right=15] (0) to (5.center);
		\draw [style=diredge, bend right=15] (3.center) to (1);
		\draw [style=diredge, bend right=15] (1) to (2.center);
		\draw [style=diredge] (1) to (0);
		\draw [style=diredge, in=-135, out=-90, looseness=1.50] (12.center) to (9);
		\draw [style=diredge] (9) to (11.center);
		\draw [style=diredge] (8.center) to (10);
		\draw [style=diredge, in=90, out=135, looseness=1.50] (10) to (7.center);
		\draw [style=diredge] (10) to (9);
		\draw [style=diredge, bend right=15] (19) to (17.center);
		\draw [style=diredge] (18.center) to (14);
		\draw [style=diredge] (14) to (19);
		\draw [style=diredge, in=-120, out=-90, looseness=2.50] (15.center) to (20);
		\draw [in=0, out=135, looseness=0.75] (14) to (21.center);
		\draw [style=diredge, in=90, out=180, looseness=0.75] (21.center) to (16.center);
		\draw [style=diredge] (19) to (20);
		\draw [style=diredge, in=-120, out=-90, looseness=2.50] (22.center) to (30);
		\draw [in=0, out=135, looseness=0.75] (28) to (25.center);
		\draw [style=diredge, in=90, out=180, looseness=0.75] (25.center) to (23.center);
		\draw [style=diredge, in=-60, out=45] (28) to (30);
		\draw [style=diredge] (29.center) to (27);
		\draw [style=diredge] (27) to (28);
		\draw [style=diredge, bend right=15] (27) to (24.center);
		\draw [style=diredge, in=-60, out=-90, looseness=2.50] (35.center) to (37);
		\draw [in=0, out=135, looseness=0.75] (31) to (33.center);
		\draw [style=diredge, in=90, out=180, looseness=0.75] (33.center) to (38.center);
		\draw [style=diredge, in=-135, out=45] (31) to (37);
		\draw [style=diredge] (39.center) to (34);
		\draw [style=diredge] (34) to (31);
		\draw [style=diredge, bend right=15] (34) to (36.center);
		\draw [in=0, out=135, looseness=0.75] (44) to (46.center);
		\draw [style=diredge, in=90, out=180, looseness=0.50] (46.center) to (42.center);
		\draw [style=diredge] (47.center) to (41);
		\draw [style=diredge, in=-135, out=142, looseness=1.25] (41) to (44);
		\draw [style=diredge, bend right=15] (41) to (43.center);
		\draw [style=diredge, in=-45, out=-90, looseness=1.50] (40.center) to (44);
		\draw [style=diredge] (50.center) to (52);
		\draw [style=diredge] (52) to (53.center);
		\draw [style=diredge] (49.center) to (54);
		\draw [style=diredge] (54) to (51.center);
		\draw [style=diredge, bend right=60, looseness=1.25] (54) to (52);
	\end{pgfonlayer}
\end{tikzpicture}}
\endpgfgraphicnamed\qedhere\]
\end{proof}

% We can prove an alternative form for the normalisability equation, also using action maps.

\begin{lemma}\label{lem:norm-alt}
  Any normalisable dagger Frobenius algebra satisfies $\;%
\beginpgfgraphicnamed{norm_alt}
\begin{tikzpicture}[dotpic]
	\begin{pgfonlayer}{nodelayer}
		\node [style=none] (0) at (1, -1.25) {};
		\node [style=none] (1) at (1, 1.25) {};
		\node [style=none] (2) at (-1.5, 1.5) {};
		\node [style=none] (3) at (-1.5, -1.5) {};
		\node [style=none] (4) at (0, -0) {$=$};
		\node [style=white dot] (5) at (-1.5, 0.75) {};
		\node [style=white dot] (6) at (-1.5, -0.75) {};
		\node [style=white norm] (7) at (-2, -0) {};
		\node [style=white norm] (8) at (-1, -0) {};
	\end{pgfonlayer}
	\begin{pgfonlayer}{edgelayer}
		\draw [style=diredge] (0.center) to (1.center);
		\draw [style=diredge] (5) to (2.center);
		\draw [style=diredge] (3.center) to (6);
		\draw [style=diredge, in=-90, out=30] (6) to (8);
		\draw [style=diredge, in=-30, out=90] (8) to (5);
		\draw [style=diredge, in=90, out=-150] (5) to (7);
		\draw [style=diredge, in=150, out=-90] (7) to (6);
	\end{pgfonlayer}
\end{tikzpicture}}
\endpgfgraphicnamed\;$.
\end{lemma}
\begin{proof}
  Follows from Frobenius equations and a trace identity $(*)$ for
  symmetric Frobenius algebras.
  \[%
\beginpgfgraphicnamed{norm_alt_pf}
\begin{tikzpicture}[dotpic]
	\begin{pgfonlayer}{nodelayer}
		\node [style=none] (0) at (-1.5, 1.75) {};
		\node [style=none] (1) at (-1.5, -1.75) {};
		\node [style=none] (2) at (0, -0) {$=$};
		\node [style=white dot] (3) at (-1.5, 1) {};
		\node [style=white dot] (4) at (-1.5, -1) {};
		\node [style=white norm] (5) at (-2, -0) {};
		\node [style=white norm] (6) at (-1, -0) {};
		\node [style=white dot] (7) at (1.25, 1.25) {};
		\node [style=none] (8) at (1.25, -2) {};
		\node [style=white dot] (9) at (1.25, -0) {};
		\node [style=none] (10) at (1.25, 2) {};
		\node [style=white norm] (11) at (1.25, -1.25) {};
		\node [style=white norm] (12) at (1.25, -0.75) {};
		\node [style=none] (13) at (4.5, -2) {};
		\node [style=white dot] (14) at (4.25, 1) {};
		\node [style=none] (15) at (4.75, 2) {};
		\node [style=white norm] (16) at (4.5, -0.75) {};
		\node [style=white norm] (17) at (4.5, -1.25) {};
		\node [style=white dot] (18) at (4.25, -0) {};
		\node [style=none] (19) at (2.5, -0) {$=$};
		\node [style=none] (20) at (5.75, -0) {$=$};
		\node [style=white dot] (21) at (8, -0) {};
		\node [style=white norm] (22) at (8, -0.75) {};
		\node [style=none] (23) at (8.75, 2) {};
		\node [style=none] (24) at (8, -2) {};
		\node [style=white norm] (25) at (8, -1.25) {};
		\node [style=none] (26) at (6.75, 0.5) {};
		\node [style=white dot] (27) at (7.25, 1) {};
		\node [style=none] (28) at (6.75, 1.75) {};
		\node [style=none] (29) at (9.5, -0) {$=$};
		\node [style=none] (30) at (10.5, 1.75) {};
		\node [style=white norm] (31) at (11.25, -0.25) {};
		\node [style=none] (32) at (10.5, 0.5) {};
		\node [style=white norm] (33) at (11.25, 0.25) {};
		\node [style=none] (34) at (12.5, 2) {};
		\node [style=white dot] (35) at (11.75, -1) {};
		\node [style=none] (36) at (11.75, -2) {};
		\node [style=white dot] (37) at (11, 1) {};
		\node [style=none] (38) at (16.75, -0) {$=$};
		\node [style=white dot] (39) at (17.5, 0.5) {};
		\node [style=white dot] (40) at (18, -0.75) {};
		\node [style=none] (41) at (18.75, 1.75) {};
		\node [style=none] (42) at (18, -1.75) {};
		\node [style=none] (43) at (19.25, -0) {$=$};
		\node [style=none] (44) at (20.25, -1.25) {};
		\node [style=none] (45) at (20.25, 1.25) {};
		\node [style=none] (46) at (3.75, 1.5) {};
		\node [style=none] (47) at (3.75, -0.5) {};
		\node [style=none] (48) at (13.25, -0) {$=$};
		\node [style=white norm] (49) at (14.25, 0.5) {};
		\node [style=none] (50) at (16.25, 2) {};
		\node [style=white dot] (51) at (15.25, -1) {};
		\node [style=none] (52) at (15.25, -2) {};
		\node [style=none] (53) at (15, 2) {};
		\node [style=white dot] (54) at (14.5, 1.25) {};
		\node [style=none] (55) at (15, 0.75) {};
		\node [style=white norm] (56) at (14.25, -0) {};
		\node [style=none] (57) at (13.25, 0.75) {\small $(*)$};
	\end{pgfonlayer}
	\begin{pgfonlayer}{edgelayer}
		\draw [style=diredge] (3) to (0.center);
		\draw [style=diredge] (1.center) to (4);
		\draw [style=diredge, in=-90, out=30] (4) to (6);
		\draw [style=diredge, in=-30, out=90] (6) to (3);
		\draw [style=diredge, in=90, out=-150] (3) to (5);
		\draw [style=diredge, in=150, out=-90] (5) to (4);
		\draw [style=diredge] (7) to (10.center);
		\draw [style=diredge] (8.center) to (11);
		\draw (11) to (12);
		\draw [style=diredge] (12) to (9);
		\draw [style=diredge, bend right=45, looseness=1.25] (9) to (7);
		\draw [style=diredge, bend right=45, looseness=1.25] (7) to (9);
		\draw [style=diredge] (14) to (15.center);
		\draw [style=diredge] (13.center) to (17);
		\draw (17) to (16);
		\draw [style=diredge, in=-72, out=90] (16) to (18);
		\draw [style=diredge] (18) to (14);
		\draw [style=diredge, bend right=15] (21) to (23.center);
		\draw [style=diredge] (24.center) to (25);
		\draw (25) to (22);
		\draw [style=diredge] (22) to (21);
		\draw [in=0, out=90] (27) to (28.center);
		\draw [style=diredge, in=-105, out=0] (26.center) to (27);
		\draw [in=180, out=180, looseness=1.50] (28.center) to (26.center);
		\draw [style=diredge, in=-45, out=135] (21) to (27);
		\draw [style=diredge, in=-90, out=45, looseness=0.75] (35) to (34.center);
		\draw (31) to (33);
		\draw [in=0, out=90] (37) to (30.center);
		\draw [style=diredge, in=-105, out=0] (32.center) to (37);
		\draw [in=180, out=180, looseness=1.50] (30.center) to (32.center);
		\draw [style=diredge, in=-90, out=135] (35) to (31);
		\draw [style=diredge, in=-60, out=90] (33) to (37);
		\draw [style=diredge] (36.center) to (35);
		\draw [style=diredge, in=-90, out=45, looseness=0.75] (40) to (41.center);
		\draw [style=diredge] (42.center) to (40);
		\draw [style=diredge, in=-90, out=142] (40) to (39);
		\draw [style=diredge] (44.center) to (45.center);
		\draw [in=0, out=120] (14) to (46.center);
		\draw [in=180, out=180] (46.center) to (47.center);
		\draw [style=diredge, in=-120, out=0] (47.center) to (18);
		\draw [style=diredge, in=-90, out=45, looseness=0.75] (51) to (50.center);
		\draw (56) to (49);
		\draw [in=180, out=90] (54) to (53.center);
		\draw [style=diredge, in=-75, out=180] (55.center) to (54);
		\draw [in=0, out=0, looseness=1.50] (53.center) to (55.center);
		\draw [style=diredge, in=-90, out=135] (51) to (56);
		\draw [style=diredge, in=-120, out=90] (49) to (54);
		\draw [style=diredge] (52.center) to (51);
	\end{pgfonlayer}
\end{tikzpicture}}
\endpgfgraphicnamed\qedhere\]
\end{proof}

These lemmas can express the CP*--condition in the sometimes more
convenient ``convolution form''.

\begin{proposition}
  Let $\V$ be a dagger compact category, $(A, \whitemult)$ and $(B,
  \graymult)$ be normalisable dagger Frobenius algebras, and $f \colon
  A \to B$ a morphism.
  \ctikzfig{cpstar_conv_form}
\end{proposition}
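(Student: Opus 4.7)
The plan is to prove the equivalence of the two forms by diagrammatic manipulation, using the Frobenius structure together with the earlier lemmas to re-shape the action/coaction presentation into a purely convolution-style one.

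First I would unfold the definitions of the right action and coaction maps appearing in the original CP*-condition (\ref{eq:cpstar-condition}) in terms of the Frobenius multiplications $\whitemult$, $\graymult$ and the caps and cups coming from compact closure. After this unfolding, both sides of the CP*-condition involve only $f$, the auxiliary map $g$, the Frobenius structures on $A$ and $B$, and the normalisers. The aim is then to bend the $X$-wires coming from $g$ and $g^{\dagger}$ around using cups so that they become contracted internally, leaving a diagram in which only $A$ and $B$ wires are external and where the $\whitecomult$ on $A$ and $\graymult$ on $B$ play the role of the convolution.

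Next I would perform the bookkeeping using Lemma~\ref{lem:frob-actions} and Lemma~\ref{lem:norm-alt}. Lemma~\ref{lem:frob-actions} lets me replace composites of an action with (part of) its adjoint by the Frobenius loop, and Lemma~\ref{lem:norm-alt} lets me redistribute the normaliser $\whitenorm$ so that it sits in the canonical place demanded by the convolution form. The forward direction then amounts to taking $k$ in the convolution form to be essentially $g$ reshaped by the caps and cups (with one copy of the normaliser absorbed), while the reverse direction takes $X$ to be the auxiliary object from $k$ and defines $g$ by undoing this reshaping.

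The main obstacle will be keeping track of where the normalisers have to appear. Because the Frobenius algebras are only normalisable rather than special, applying Lemma~\ref{lem:norm-alt} introduces a $\whitenorm$ or $\graynorm$ at each pant-loop, and one has to verify that the number and placement of these normalisers is exactly what the statement of the convolution form prescribes. Once this is done on one side, centrality and positive-definiteness of $z$ (from Definition~\ref{def:dnfa}) ensure that the same manipulations are reversible, yielding the equivalence in both directions.
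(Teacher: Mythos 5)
Your proposal is correct and follows essentially the same route as the paper: both directions are obtained by pre- and post-composing with the appropriate action and coaction maps (your ``bending the wires with caps and cups''), then invoking Lemma~\ref{lem:frob-actions} for one direction and Lemma~\ref{lem:norm-alt} together with invertibility and centrality of the normaliser for the other. The paper's proof is just a two-line version of exactly this bookkeeping.
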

\begin{proof}
  For ($\Rightarrow$), apply $(\graycoaction \circ - \circ
  \whiteaction)$ to both sides and use Lemma~\ref{lem:norm-alt} and
  properties of normalisers. For ($\Leftarrow$), apply $(\grayaction
  \circ - \circ \whitecoaction)$ to both sides and apply
  Lemma~\ref{lem:frob-actions}. 
\end{proof}

\begin{theorem}\label{thm:cpstar-smc}
  If \V is a dagger compact category \V, so is $\CPs[\V]$.
\end{theorem}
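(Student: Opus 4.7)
The plan is to verify the statement in four stages, matching the layers of structure: (i) $\CPs[\V]$ is a well-defined category, (ii) it carries a dagger, (iii) it is symmetric monoidal via pointwise tensor of Frobenius algebras, and (iv) every object is its own dual via its Frobenius cap and cup. At each stage, the real content is that the CP*-condition is stable under the operation in question; the axioms themselves are then inherited from $\V$.

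For the category structure, identity morphisms satisfy the CP*-condition via Lemma~\ref{lem:norm-alt}, which essentially provides a witness $g\colon A\to A\otimes A$ built from the right coaction together with the normaliser, so that plugging it into the left-hand side of (\ref{eq:cpstar-condition}) yields $\id[A]$. For composition, given witnesses $g_f\colon A\to X\otimes B$ for $f$ and $g_h\colon B\to Y\otimes C$ for $h$, I would build a witness for $h\circ f$ by ``stacking'' $g_h$ on top of $g_f$: the natural candidate is $(\id[X]\otimes g_h)\circ g_f \colon A\to X\otimes Y\otimes C$. Verifying equation (\ref{eq:cpstar-condition}) for $h\circ f$ with this witness uses the convolution form of the CP*-condition together with the Frobenius law on $B$ to fuse the two diagrams; this is routine graphical manipulation.

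For the dagger, if $f$ satisfies the CP*-condition with witness $g$, then daggering equation (\ref{eq:cpstar-condition}) and using Theorem~2.5 (normalisable dagger Frobenius algebras are symmetric) to rearrange the resulting picture yields that $f^\dag$ satisfies the CP*-condition with witness $g^\dag$ (up to a swap). For the symmetric monoidal structure, the monoidal unit is $I$ with its trivial Frobenius algebra, and the tensor of two normalisable dagger Frobenius algebras $(A,\whitemult)$ and $(B,\graymult)$ is the componentwise product, with the product of normalisers serving as the new normaliser (central and positive-definite since both factors are). Given CP*-morphisms $f$ and $f'$ with witnesses $g$ and $g'$, the morphism $f\otimes f'$ is CP* with witness $g\otimes g'$, up to the evident swap to group the $X$-factors together. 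Symmetry and coherence are inherited from $\V$.

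The compact closed structure is where I expect the most care. The key observation is that every symmetric Frobenius algebra provides its own cap and cup (the ``cap\_cup'' figure), and by Theorem~2.5 these are available on every object of $\CPs[\V]$. I would take each object $(A,\whitemult)$ to be self-dual, with unit and counit given by these Frobenius cap and cup (suitably normalised if necessary, using the central positive-definite normaliser). The snake equations hold in $\V$, and what must be checked is precisely that these cap/cup morphisms satisfy the CP*-condition: for the cup $I\to A\otimes A$, the witness $g\colon I\to X\otimes A\otimes A$ can be built from the comultiplication followed by a coaction, and the Frobenius law together with Lemma~\ref{lem:frob-actions} should make the CP*-equation collapse to the cup itself. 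The hard part — and the main obstacle — is bookkeeping the normalisers so that the cap and cup satisfy CP* on the nose rather than up to a scalar, and ensuring compatibility of the dagger with the chosen duality; this is where positive-definiteness of $z$ and the defining equation of Definition~\ref{def:dnfa} are used essentially.
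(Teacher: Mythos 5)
Your stages (i)--(iii) follow the paper's route: identities via Lemma~\ref{lem:frob-actions} (note: that lemma, not Lemma~\ref{lem:norm-alt}, supplies the witness $\whitecomult$ for $\id[A]$), composition by fusing Kraus decompositions using Lemma~\ref{lem:norm-alt}, componentwise tensor with the product normaliser, and the dagger inherited from $\V$. Up to that point the proposal is sound, except that your candidate witness $(\id[X]\otimes g_h)\circ g_f$ for a composite needs normaliser corrections coming from Lemma~\ref{lem:norm-alt}, since the CP*-condition~\eqref{eq:cpstar-condition} is a Kraus form for $\grayaction\circ f\circ\whitecoaction$ rather than for $f$ itself; the middle algebra $B$ cannot simply be elided.

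The compact structure is where your proposal genuinely fails, and the obstruction is not the normaliser bookkeeping you anticipate. You propose to make $(A,\whitemult)$ self-dual using the Frobenius cap $\whitecounit\circ\whitemult\colon A\otimes A\to I$ and cup $\whitecomult\circ\whiteunit\colon I\to A\otimes A$. These do not satisfy the CP*-condition for noncommutative algebras, and no central positive-definite rescaling can repair this. Concretely, take $\V=\FHilb$ and $(A,\whitemult)=(H^*\otimes H,\pantsalg)$ with $\dim H=n\geq 2$: the Frobenius cup is (a scalar multiple of) $\sum_{i,j}e_{ij}\otimes e_{ji}$, which under $\mathbb{M}_n\otimes\mathbb{M}_n\cong\mathbb{M}_{n^2}$ is the swap operator, with eigenvalues $\pm 1$; it is self-adjoint but not positive, hence not a positive element of $(A,\whitemult)\otimes(A,\whitemult)$. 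This is precisely the statement that the transpose map is positive but not completely positive, so the failure is fundamental rather than a matter of scalars. The paper instead takes the dual of $(A,\whitemult)$ to be $(A^*,\whitedualmult)$ --- the dual object carrying the reversed multiplication --- and lifts the cap and cup of $\V$'s own compact structure, $e_{A^*}\colon(A,\whitemult)\otimes(A^*,\whitedualmult)\to I$; for these the CP*-condition does follow from the Frobenius identities and Lemma~\ref{lem:frob-actions}. You need this asymmetric duality; self-duality via the Frobenius form is not available in $\CPs[\V]$.
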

\begin{proof}
  Identity maps $\id[A] \colon (A, \whitemult) \to (A,
  \whitemult)$ satisfy the $\CPs$-condition: letting $g=\whitecomult$
  in~\eqref{eq:cpstar-condition} does the job by
  Lemma~\ref{lem:frob-actions}, whose left-hand side is$\whiteaction
  \circ \id[A] \circ \whitecoaction$.
  
  Next, suppose $f \colon (A, \whitemult) \to (B, \graymult)$ and
  $g \colon (B, \graymult) \to (C, \blackmult)$ satisfy the
  CP*--condition. It then follows from Lemma~\ref{lem:norm-alt} that
  their composition does, too.
  \ctikzfig{compose_cpstar}
  
  For the monoidal structure, take $(A,\whitemult) \otimes (B,
  \graymult) := (A \otimes B, \prodmult{white dot}{gray dot})$. For
  maps $f \colon (A, \whitemult) \to (C, \dotmult{alt white dot})$ and
  $g \colon (B, \graymult) \to (D, \dotmult{alt gray dot})$
  satisfying~\eqref{eq:cpstar-condition}, also  $f \otimes g \colon (A
  \otimes B, \prodmult{white dot}{gray dot}) \to (C \otimes D,
  \prodmult{alt white dot}{alt gray dot})$ satisfies the
  CP*--condition. This can be seen by applying the coaction of
  $\prodmult{white dot}{gray dot}$ and the action of $\prodmult{alt
  white dot}{alt gray dot}$, then decomposing $f$ into $h_*,h$ and $g$
  into $i_*,i$, as follows.
  \ctikzfig{cpstar_monoidal}
  
  As for the tensor unit, note that $I := (I, \rho_I)$ is a
  normalisable dagger Frobenius algebra by monoidal coherence in \V.
  Using this definition of $\otimes$ and $I$, the monoidal structure
  maps $\alpha$, $\lambda$, and $\rho$ from \V trivially satisfy the
  CP*--condition. Thus $\CPs[\V]$ is a monoidal category. $\CPs[\V]$
  inherits the dagger from \V. Symmetry and dual maps in \V lift to the
  following morphisms in $\CPs[\V]$.
  \begin{center}
    $\sigma_{A,B} \colon
       (A \otimes B, \prodmult{white dot}{gray dot}) \to
       (B \otimes A, \prodmult{gray dot}{white dot})$
    \qquad\qquad
    $ e_{A^*} \colon (A, \whitemult) \otimes (A^*, \whitedualmult) \to I $
  \end{center}
  The Frobenius identities and Lemma~\ref{lem:frob-actions} establish
  the CP*--condition for these maps.
\end{proof}

We refer to a morphism $ I \rightarrow (A, \whitemult)$ of $\CPs[\V]$ as a
\textit{positive element} of $(A, \whitemult)$. Another way to express
the CP*--condition for a \V-morphism is to say that
it preserves the property of being a positive element when applied to a
some subsystem, as in the following theorem. This will be useful to
connect to the traditional notion of complete positivity of linear
maps between C*-algebras.  

\begin{theorem}\label{thm:pos-elems}
  Let $(A,\whitemult)$ and $(B, \graymult)$ be normalisable dagger
  Frobenius algebras and $f \colon A \to B$ a morphism in a dagger
  compact category $\V$. The following are equivalent:
  \begin{enumerate}[(a)]
  \item $f$ satisfies the CP*--condition;
  \item postcomposing with $f \otimes \id[C]$ sends positive elements
    of $(A,\whitemult) \otimes (C, \blackmult)$ to positive elements
    of $(B, \graymult) \otimes (C, \blackmult)$ for all dagger
    normalisable Frobenius algebras $(C, \blackmult)$;
  \item postcomposing with $f \otimes \id[X^* \otimes X]$ sends positive elements
    of $(A,\whitemult) \otimes (X^* \otimes X,\pantsalg)$ to positive
    elements of $(B, \graymult) \otimes (X^* \otimes X,\pantsalg)$ for 
    all objects $X$ in $\V$.
  \end{enumerate}
\end{theorem}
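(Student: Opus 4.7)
The plan is to close the cycle (a) $\Rightarrow$ (b) $\Rightarrow$ (c) $\Rightarrow$ (a). The first two implications come essentially for free from the results already in place, and all the real work is in (c) $\Rightarrow$ (a).

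For (a) $\Rightarrow$ (b), observe that positive elements of $(A,\whitemult)\otimes(C,\blackmult)$ are by definition just morphisms $I \to (A,\whitemult)\otimes(C,\blackmult)$ in $\CPs[\V]$. By Theorem~\ref{thm:cpstar-smc}, $\CPs[\V]$ is a dagger compact category, so $f \otimes \id[C]$ is a morphism of $\CPs[\V]$ whenever $f$ is, and post-composition with it sends positive elements to positive elements. For (b) $\Rightarrow$ (c), Proposition~\ref{prop:CPMobs} shows that $(X^* \otimes X, \pantsalg)$ is a normalisable dagger Frobenius algebra for every object $X$ of \V, so (c) is simply the instance of (b) in which $(C,\blackmult)$ has this particular shape.

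The substantive direction is (c) $\Rightarrow$ (a), which I would handle by a Choi--Jamio\l{}kowski-style construction. Taking $X := A$, I would write down a canonical positive element $\eta \colon I \to A \otimes A^* \otimes A$ of $(A,\whitemult)\otimes(A^*\otimes A,\pantsalg)$, built from the Frobenius structure of $(A,\whitemult)$ itself: essentially the name of $\id[A]$ folded through the comultiplication $\whitecomult$ and adjusted by a normaliser $\whitenorm$ so that it is manifestly of the $h^\dagger h$ shape required to witness the CP*--condition from $I$. Verifying that $\eta$ really is a positive element should reduce, via the convolution form of the CP*--condition together with Lemmas~\ref{lem:frob-actions} and \ref{lem:norm-alt}, to a diagrammatic identity that follows from the Frobenius axioms and the defining equation of the normaliser. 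Hypothesis (c) then makes $(f \otimes \id[A^* \otimes A]) \circ \eta$ a positive element of $(B,\graymult)\otimes(A^*\otimes A,\pantsalg)$; unfolding what this means produces a morphism whose $A^*\otimes A$ leg can be straightened using compact closure of \V, and the resulting morphism $g \colon A \to X' \otimes B$ (for some $X'$) is exactly the witness demanded by~\eqref{eq:cpstar-condition} for $f$.

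The main obstacle will be calibrating the normaliser factors in $\eta$ so that the extracted $g$ reproduces~\eqref{eq:cpstar-condition} on the nose, and not merely up to an invertible central correction; here the symmetry of normalisable Frobenius algebras established above and the central, positive-definite nature of $\whitenorm$ are essential, since they are precisely what lets such a correction be absorbed back into $g$ by a $h \mapsto h \circ \whitenorm^{1/2}$-type adjustment.
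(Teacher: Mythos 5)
Your proposal takes essentially the same route as the paper: (a)$\Rightarrow$(b) from Theorem~\ref{thm:cpstar-smc}, (b)$\Rightarrow$(c) by specialisation via Proposition~\ref{prop:CPMobs}, and (c)$\Rightarrow$(a) by applying the hypothesis to a canonical Choi-type positive element of $(A,\whitemult)\otimes(X^*\otimes X,\pantsalg)$ built from the Frobenius structure of $A$ (the paper obtains it by pushing a positive element of $(A,\whitemult)\otimes(A^*,\whitedualmult)$ through the $*$-homomorphism $\blackaction$, which is a $\CPs$-morphism) and then straightening with white caps. The only point to watch is your closing remark about a $\whitenorm^{1/2}$-type correction: the central factors that actually arise are the even powers $\whitenorm\circ\whitenorm$ supplied by Lemma~\ref{lem:norm-alt}, which already split as $z \circ z$ and can be absorbed into the witness without assuming any square roots.
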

\begin{proof}
  For (a) $\Rightarrow$ (b): if $\rho$ is a positive element of
  $(A,\whitemult) \otimes (C,\blackmult)$ and $f$ satisfies the
  CP*--condition, then so does $(f \otimes \id[C]) \circ \rho$, by
  Theorem~\ref{thm:cpstar-smc}. The implication (b) $\Rightarrow$ (c)
  is trivial. Finally, for (c) $\Rightarrow$ (a), take $(C, \blackmult) = (A^*,
  \whitedualmult)$. The action map $\blackaction \colon (C,\blackmult)
  \to (C^* \otimes C, \pantsalg)$ is a morphism in $\CPs[\V]$. As a
  consequence of this fact and Theorem~\ref{thm:cpstar-smc}, the
  following is a positive element of $(A,\whitemult) \otimes (X^*
  \otimes X, \pantsalg)$.
  \ctikzfig{cpstar_alt_matrix_point}
  So, by assumption, $(f \otimes \id[A^*]) \circ \rho$  is also a positive
  element. Applying white caps to both sides finishes the proof.
  \[ %
\beginpgfgraphicnamed{cpstar_alt_pf}
\begin{tikzpicture}[dotpic]
	\begin{pgfonlayer}{nodelayer}
		\node [style=none] (0) at (-7.5, -0) {$=$};
		\node [style=square box] (1) at (-10.25, -0.5) {$f$};
		\node [style=gray dot] (2) at (-10.25, 1.5) {};
		\node [style=none] (3) at (-9.5, 2.5) {};
		\node [style=none] (4) at (-11, 2.5) {};
		\node [style=white dot] (5) at (-9, -0) {};
		\node [style=none] (6) at (-11.75, 1.75) {};
		\node [style=none] (7) at (-8.75, 2.5) {};
		\node [style=square box, minimum width=1 cm, minimum height=0.75 cm] (8) at (-5.5, -0.25) {$g_*$};
		\node [style=none] (9) at (-2.75, 0.5) {};
		\node [style=none] (10) at (-6.25, 0.5) {};
		\node [style=none] (11) at (-2.75, 1.75) {};
		\node [style=none] (12) at (-6.25, 1.75) {};
		\node [style=square box, minimum width=1 cm, minimum height=0.75 cm] (13) at (-3, -0.25) {$g$};
		\node [style=none] (14) at (-5, 0.5) {};
		\node [style=none] (15) at (-3.5, 0.5) {};
		\node [style=none] (16) at (-10.25, -1.5) {};
		\node [style=none] (17) at (-9, -1.5) {};
		\node [style=none] (18) at (-11.75, 2.5) {};
		\node [style=none] (19) at (-2.25, 0.5) {};
		\node [style=none] (20) at (-2.25, 1.75) {};
		\node [style=none] (21) at (-5.75, 1.75) {};
		\node [style=none] (22) at (-5.75, 0.5) {};
		\node [style=none] (23) at (0, -0) {$\Rightarrow$};
		\node [style=none] (24) at (4.25, 2) {};
		\node [style=none] (25) at (11, 0.5) {};
		\node [style=none] (26) at (4.25, -2) {};
		\node [style=none] (27) at (2.75, -2) {};
		\node [style=square box, minimum width=1 cm, minimum height=0.75 cm] (28) at (8, -0.25) {$g_*$};
		\node [style=square box, minimum width=1 cm, minimum height=0.75 cm] (29) at (10.5, -0.25) {$g$};
		\node [style=white dot] (30) at (3.5, -1) {};
		\node [style=white dot] (31) at (7, 1) {};
		\node [style=white dot] (32) at (11.5, 1) {};
		\node [style=none] (33) at (10, 0.5) {};
		\node [style=none] (34) at (2.75, 2) {};
		\node [style=none] (35) at (8.5, 0.5) {};
		\node [style=none] (36) at (5.5, -0) {$=$};
		\node [style=gray dot] (37) at (3.5, 1) {};
		\node [style=none] (38) at (7.5, 0.5) {};
		\node [style=none] (39) at (10.5, 2.25) {};
		\node [style=square box] (40) at (3.5, -0) {$f$};
		\node [style=none] (41) at (8, 2.25) {};
		\node [style=none] (42) at (10.5, 0.5) {};
		\node [style=none] (43) at (8, 0.5) {};
		\node [style=none] (44) at (6.5, 0.5) {};
		\node [style=none] (45) at (12, 0.5) {};
		\node [style=none] (46) at (6.5, -2) {};
		\node [style=none] (47) at (12, -2) {};
	\end{pgfonlayer}
	\begin{pgfonlayer}{edgelayer}
		\draw [style=diredge] (1) to (2);
		\draw [style=diredge, in=143, out=-90] (4.center) to (2);
		\draw [style=diredge, in=-90, out=37] (2) to (3.center);
		\draw [style=diredge, in=-90, out=150] (5) to (7.center);
		\draw [style=diredge, in=60, out=-90] (6.center) to (5);
		\draw [style=diredge] (9.center) to (11.center);
		\draw [style=diredge] (12.center) to (10.center);
		\draw [style=diredge, in=90, out=90, looseness=1.50] (15.center) to (14.center);
		\draw [style=diredge] (16.center) to (1);
		\draw [in=-90, out=-90, looseness=1.25] (17.center) to (16.center);
		\draw (5) to (17.center);
		\draw (18.center) to (6.center);
		\draw [style=diredge] (19.center) to (20.center);
		\draw [style=diredge] (21.center) to (22.center);
		\draw [style=diredge] (40) to (37);
		\draw [style=diredge, bend right] (34.center) to (37);
		\draw [style=diredge, bend right] (37) to (24.center);
		\draw [style=diredge, bend right] (30) to (27.center);
		\draw [style=diredge, bend right] (26.center) to (30);
		\draw [style=diredge] (30) to (40);
		\draw [style=diredge] (42.center) to (39.center);
		\draw [style=diredge, in=-165, out=90] (25.center) to (32);
		\draw [style=diredge] (41.center) to (43.center);
		\draw [style=diredge, in=90, out=-15] (31) to (38.center);
		\draw [style=diredge, in=90, out=90, looseness=1.50] (33.center) to (35.center);
		\draw [in=90, out=-165] (31) to (44.center);
		\draw [style=diredge, in=-15, out=90] (45.center) to (32);
		\draw [style=diredge] (44.center) to (46.center);
		\draw (47.center) to (45.center);
	\end{pgfonlayer}
\end{tikzpicture}}
\endpgfgraphicnamed \]
  See also~\cite[Proposition~3.4]{journalversion}.
\end{proof}

% Actually, the CP*--condition is already implied by a seemingly weaker
% statement, where we restrict auxiliary systems to only those algebras
% of the form $L(X)$ for some object $X$. 

% To show $\CPs[\V]$ inherits symmetric structure, it suffices to
% show that the swap $\sigma_{A,B} \colon A \otimes B \to B
% \otimes A$ lifts to a morphism $\sigma_{A,B} \colon
%     (A \otimes B, \prodmult{white dot}{gray dot}) \to
%     (B \otimes A, \prodmult{gray dot}{white dot})$
% in $\CPs[\V]$. Two applications of Lemma~\ref{lem:frob-actions}
% establish this, as follows.
% \ctikzfig{cpstar_symm}

% $\CPs[\V]$ also inherits the dagger from \V. If $f \colon
% (A, \whitemult) \to (B, \graymult)$ satisfies the CP*--condition,
% then so too does $f^\dag$. To see this, note that $
%   \whiteaction \circ f^\dag \circ \graycoaction =
%   \left( \grayaction \circ f \circ \whitecoaction \right)^\dag$.

% Let $(A, \whitemult)$ be an object in $\CPs[\V]$. Let $A^*$ be a
% right (equivalently left) dual of $A$, with cap $e_{A^*}$. We can
% show this lifts to a morphism in $\CPs[\V]$, namely
% $ e_{A^*} \colon (A, \whitemult) \otimes (A^*, \whitedualmult) \to I $,
% where $\whitedualmult$ is the lower-star of the opposite algebra of
% $\whitemult$. Again, the proof follows from
% Lemma~\ref{lem:frob-actions}, as follows. 
% \ctikzfig{cpstar_cap}
% By Theorem~\ref{thm:cpstar-smc}, $e_{A} = e_{A^*} \circ
% \sigma_{A^*,A}$ and $d_{A^*} = e_A^\dag$ also
% satisfy~\eqref{eq:cpstar-condition}, making $\CPs[\V]$ dagger
% compact.  

\section{Embedding Selinger's CPM--construction}\label{sec:embedding}

This section will concentrate on the ``purely quantum'' objects in
$\CPs[\V]$, by proving that the latter embeds $\CPM[\V]$ in a full,
faithful, and strongly dagger symmetric monoidal way. First, we recall
Selinger's CPM--construction~\cite{selinger:completelypositive}.

\begin{definition}
  For a dagger compact category \V, form the dagger compact category
  $\CPM[\V]$ as follows. Objects are the same as those in \V, and
  morphisms $f \in \CPM[\V](A,B)$ are \V-morphisms $f \colon A^*
  \otimes A \rightarrow B^* \otimes B$ for which there is $g \colon A
  \rightarrow X \otimes B$ satisfying the following condition.
  \ctikzfig{cp_condition}
  Composition, identity maps, and $\otimes$ on objects are defined as
  in \V. On morphisms, $\otimes$ is defined as: 
  \ctikzfig{cp_monoidal}
  % Symmetries and compact structure are inherited from \V, only ``doubled''.
  % \ctikzfig{cp_symm_compact}
\end{definition}

A strongly dagger symmetric monoidal functor is a functor $F$ along
with a unitary natural isomorphism $\varphi_{A,B} \colon F(A \otimes B)
\to F(A) \otimes F(B)$ satisfying several coherence properties
that we have no space to go in to.
Our next theorem shows that $\CPM[\V]$ is equivalent to the full subcategory
of $\CPs[\V]$ consisting of objects of the form $(A^* \otimes A,
\pantsalg)$. Its proof uses $*$-homomorphisms, which we first define.

\begin{definition}\label{def:starhomo}
  If $(A, \whitemult)$ and $(B, \graymult)$ are normalisable dagger
  Frobenius algebras in a dagger compact category $\V$, a morphism $f \colon A \to B$ is
  called a \emph{$*$-homomorphism} when it satisfies the following equations.
  \ctikzfig{starhomomorphism}
\end{definition}

\begin{lemma}\label{lem:starhomoiscp}
  Let $(A,\whitemult)$ and $(B,\graymult)$ be normalisable dagger
  Frobenius algebras in a dagger compact category $\V$. If $f \colon A
  \to B$ is a $*$-homomorphism, then it is a well-defined morphism in
  $\CPs[\V]$. 
\end{lemma}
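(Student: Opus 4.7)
The plan is to reduce the $\CPs$--condition to the two defining equations of a $*$-homomorphism by exhibiting an explicit Kraus witness built from $f$ and the Frobenius structure on $B$. The choice of witness is suggested by the proof of Theorem~\ref{thm:cpstar-smc}: there $\id[A]$ was shown to satisfy~\eqref{eq:cpstar-condition} with $g := \whitecomult$. The natural $*$-homomorphism analogue is $g := \graycomult \circ f \colon A \to B \otimes B$, with $X := B$. One could equivalently take $g := (f \otimes f) \circ \whitecomult$ and use multiplicativity at the outset; either choice works.

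First, I would rewrite the $\CPs$-condition in the convolution form proved in the proposition immediately preceding Theorem~\ref{thm:cpstar-smc}, since in that form the Kraus sandwich and the action of $f$ sit on equal footing and the graphical book-keeping is lightest. Substituting the proposed $g$ into the right-hand side produces a diagram containing two copies of $f$, the multiplication and comultiplication of $B$, and the involution data on both $A$ and $B$ inherited from the normalisable dagger Frobenius structures.

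Next, I would apply the multiplicativity equation from Definition~\ref{def:starhomo}, namely $f \circ \whitemult = \graymult \circ (f \otimes f)$, together with its dagger. After a Frobenius move, these collapse the two copies of $f$ into a single application of $f$ and turn the right-hand side into a diagram of the same shape as the left-hand side of the convolution condition, except that the involution structures of $A$ and $B$ are still mixed. I would then invoke the $*$-preservation axiom of Definition~\ref{def:starhomo} (together with the explicit formula for the $\star$-operation from Theorem~\ref{thm:dfa-cstar}) to intertwine $A$'s involution with $B$'s through $f$, matching the two sides of the convolution form exactly.

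The main obstacle will be the handling of the normaliser: a $*$-homomorphism is not, a priori, required to preserve the normaliser, yet Lemma~\ref{lem:norm-alt} shows that the normaliser appears on both sides of the $\CPs$-condition. The way out is to exploit centrality and positive-definiteness of the normaliser, which together with Lemma~\ref{lem:frob-actions} and the Frobenius laws let one transport the normaliser of $A$ across $f$ and match it against the normaliser of $B$ without ever applying $f$ to it directly. Once this step is dispatched, the rest of the verification is routine graphical manipulation, and the lemma follows.
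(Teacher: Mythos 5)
Your choice of witness $g := \graycomult \circ f$ (equivalently $(f \otimes f)\circ\whitecomult$, by multiplicativity) is exactly the right one, and the core mechanism --- use the two equations of Definition~\ref{def:starhomo} to push $f$ through the action and coaction maps and then invoke Lemma~\ref{lem:frob-actions} --- is precisely the paper's proof. The paper simply observes that the $*$-homomorphism equations give $\grayaction \circ f \circ \whitecoaction = (\grayaction \circ \graycoaction) \circ (f^* \otimes f)$, and that Lemma~\ref{lem:frob-actions} puts $\grayaction \circ \graycoaction$ into the Kraus shape required by~\eqref{eq:cpstar-condition}; that is the whole argument.

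The gap is in your detour through the convolution form. Equation~\eqref{eq:cpstar-condition} itself contains no normalisers, so verifying it directly raises no normaliser issue whatsoever; it is only by passing to the convolution form that you import the normalisers of $A$ and $B$ into the goal. You correctly identify this as the sticking point, but your proposed resolution --- ``transport the normaliser of $A$ across $f$ and match it against the normaliser of $B$'' --- is not available: a $*$-homomorphism does not intertwine normalisers. For instance, the unital inclusion $\C \hookrightarrow \mathbb{M}_n(\C)$ in $\FHilb$ is a $*$-homomorphism, yet by Proposition~\ref{prop:dNFAsinFHilb} the normaliser of $\C$ is the identity while that of $\mathbb{M}_n(\C)$ is the scalar $1/\sqrt{n}$; no appeal to centrality, positive-definiteness, Lemma~\ref{lem:frob-actions}, or the Frobenius laws will commute one into the other through $f$. (The convolution form does of course hold for $*$-homomorphisms, being equivalent to the CP*-condition, but your route does not establish it.) A minor second point: Theorem~\ref{thm:dfa-cstar} is a statement about $\FHilb$, whereas the lemma concerns a general dagger compact $\V$; the $*$-preservation you need is already one of the two diagrams of Definition~\ref{def:starhomo} and should be cited as such. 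The repair is simply to drop the convolution form and check~\eqref{eq:cpstar-condition} directly with your witness.
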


\begin{proof}
  Using the definition: $\grayaction \circ f \circ \whitecoaction = 
  (\grayaction \circ \graycoaction) \circ (f^* \otimes f)$. Applying
  Lemma~\ref{lem:frob-actions} completes the proof.
\end{proof}

% \begin{proof}
%   Graphical manipulation shows the following.
%   \ctikzfig{starhomoiscp}
%   Hence this morphism is a composition of $f_* \otimes f$ and
%   $\grayaction \circ \graycoaction$. The former is 
%   completely positive by Lemma~\ref{lem:frob-actions}, and the latter
%   is completely positive by Theorem~\ref{thm:cpstar-smc}. Therefore
%   $f$ satisfies the CP*--condition. 
% \end{proof}

\begin{theorem}\label{thm:embedding}
  Let \V be a positive-dimensional dagger compact category. 
  Define $L \colon \CPM[\V] \to \CPs[\V]$ by setting $L(A) := (A^*
  \otimes A, \pantsalg)$ on objects and $L(f) = f$ on morphisms. 
  Then $L$ is a well-defined functor that is full,
  faithful, and strongly dagger symmetric monoidal. 
\end{theorem}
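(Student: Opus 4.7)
The plan is to verify four things in sequence: well-definedness, functoriality together with faithfulness, fullness, and the strong dagger symmetric monoidal structure. Well-definedness of $L$ on objects is immediate from Proposition~\ref{prop:CPMobs}, which tells us $(A^* \otimes A, \pantsalg)$ is a normalisable dagger Frobenius algebra whenever \V is positive-dimensional. Since $L$ acts as the identity on underlying \V-morphisms, preservation of identities and composition is automatic, as both $\CPM[\V]$ and $\CPs[\V]$ inherit these from \V; faithfulness follows for the same reason.

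The core technical step, on which both well-definedness on morphisms and fullness rest, is to show that the CP*--condition for the pants algebras $\pantsalg$ coincides exactly with the CPM--condition. First I would compute, using only compact closure, that the right action map $\whiteaction \colon A^* \otimes A \to (A^* \otimes A)^* \otimes (A^* \otimes A)$ for $\pantsalg$ is obtained by ``bending'' wires via the cups and caps on $A$, modulo the canonical identification $(A^* \otimes A)^* \cong A^* \otimes A$; similarly for the target coaction $\graycoaction$. Substituting these descriptions into~\eqref{eq:cpstar-condition} for a \V-morphism $f \colon A^* \otimes A \to B^* \otimes B$, the bent pants multiplication and comultiplication reassemble precisely the doubled form $(g_* \otimes g)$ characterising the CPM--condition, with the same intermediate environment $X$. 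Thus a \V-morphism satisfies one condition iff it satisfies the other; this simultaneously shows that $L$ sends $\CPM[\V]$-morphisms to $\CPs[\V]$-morphisms and that every $\CPs[\V]$-morphism between pants algebras already lies in $\CPM[\V]$, yielding fullness.

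For the monoidal structure, I would define the coherence isomorphism
\[ \varphi_{A,B} \colon (A \otimes B)^* \otimes (A \otimes B) \longrightarrow (A^* \otimes A) \otimes (B^* \otimes B) \]
as the evident \V-morphism built from the canonical iso $(A \otimes B)^* \cong B^* \otimes A^*$ together with the symmetry of \V that permutes the four factors into the required order; $\varphi_0$ collapses $I^* \otimes I$ to $I$ via the unitors. A direct graphical computation shows that $\varphi_{A,B}$ is a $*$-homomorphism from the pants algebra on $(A \otimes B)^* \otimes (A \otimes B)$ to the tensor-product algebra $\pantsalg \otimes \pantsalg$, since both multiplications are built entirely from cups and caps and the rearrangement respects them. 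Lemma~\ref{lem:starhomoiscp} then promotes $\varphi_{A,B}$ to a $\CPs[\V]$-morphism; as it is unitary in \V and its inverse is likewise a $*$-homomorphism, it is a dagger isomorphism in $\CPs[\V]$. Naturality of $\varphi$ and the strong monoidal coherence equations reduce to the corresponding identities for symmetry and compact structure in \V, and the dagger clause follows because $L$ preserves $\dag$ on morphisms and $\varphi$ is unitary.

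The main obstacle I expect is the first half of the technical step: correctly unfolding the pants-algebra (co)action map and verifying graphically that the CP*--condition collapses to the CPM--condition. That is where the doubled wires, the cups and caps inside $\pantsalg$, and the canonical duality identifications all converge, and where one must check that the witnessing environment $X$ matches on both sides. Once that equivalence is in hand, the remaining parts are formal consequences of compact closure and of Lemma~\ref{lem:starhomoiscp}.
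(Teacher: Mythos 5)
Your proposal is correct and follows essentially the same route as the paper: the key step in both is showing that, after composing with the (co)action maps of the pants algebras and unfolding them via compact closure, the CP*--condition collapses to the CPM--condition (giving well-definedness and fullness simultaneously), with faithfulness and dagger-preservation free since $L$ is identity on morphisms, and with the same reshuffling map $\varphi_{A,B}$ verified to be a unitary $*$-homomorphism (hence a $\CPs$-morphism by Lemma~\ref{lem:starhomoiscp}) for strong monoidality. The only remaining work is the graphical computation you flag as the main obstacle, which the paper carries out exactly as you describe.
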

\begin{proof}
  For well-definedness, we show 
  that a \V-morphism $f : A^* \otimes A \to B^* \otimes B$ is a
  $\CPM[\V]$-morphism from $A$ to $B$ if and only if it is a
  $\CPs[\V]$-morphism from $(A^* \otimes A, \pantsalg)$ to $(B^*
  \otimes B, \pantsalg)$. First, assume $f \in \CPM[\V]$ and
  compose with the action and coaction of the respective algebras to
  see that $f$ satisfies the CP*--condition, as follows.
  \ctikzfig{cp_implies_cpstar}
  Conversely, if $f$ is in $\CPs[\V]$, then it is also in $\CPM[\V]$,
  as follows.
  \ctikzfig{cpstar_implies_cp}
  
  Composition is the same in $\CPM[\V]$ and $\CPs[\V]$, so
  $L$ is a functor that is furthermore full and faithful. Similarly,
  daggers are the same in $\CPM[\V]$ and $\CPs[\V]$, so $L$ trivially
  preserves daggers. It now suffices to show that $L$ is strongly
  monoidal. Define the isomorphism $\varphi_{A,B} \colon L(A
  \otimes B) \to L(A) \otimes L(B)$ as the reshuffling map $(321)(4)
  \colon B^* \otimes A^* \otimes A \otimes B \to A^* \otimes A \otimes
  B^* \otimes B$.
  % \ctikzfig{reshuffle_map}
  One can verify that this map is a $*$-homomorphism from $L(A\otimes
  B)$ to $L(A) \otimes L(B)$, so it must satisfy the CP*--condition.
  This map is also unitary, and it is a routine calculation to show
  that it satisfies the coherence equations for a strong symmetric
  monoidal functor.
\end{proof}

\section{Generalised stochastic maps and measurement of quantum states}\label{sec:stoch}

Whereas the previous section focused on ``purely quantum'' objects in
$\CPs[\V]$, this section looks at the ``purely classical'' ones. We
will define a ``purely probabilistic'' category $\Stoch[\V]$, that by
construction embeds into $\CPs[\V]$. Thus objects in $\CPs[\V]$ can
be interpreted as being ``combined classical and quantum''. 
The category $\Stoch[\V]$ was first considered
in~\cite{coeckepaquettepavlovic:structuralism}. It was defined in a slightly 
different form there, but one can prove that this coincides with the following
definition.

\begin{definition}  
  For a dagger compact category $\V$, define $\Stoch[\V]$
  to be the full subcategory of $\CPs[\V]$ consisting of all
  commutative normalisable dagger Frobenius algebras.
\end{definition}

The next proposition justifies why this category is that of
``classical channels''. We call a morphism $f \colon (A,\whitemult)
\to (B,\graymult)$ in $\CPs[\V]$ \emph{normalised} if it preserves
counits: $\graycounit \circ f = \whitecounit$.  
Because commutative finite-dimensional C*-algebras correspond to
finite-dimensional Hilbert spaces with a choice of orthonormal basis,
we can think of the latter as objects of $\Stoch[\FHilb]$~\cite{coeckepavlovicvicary:bases,abramskyheunen:hstar,heunen:complementarity}.
Recall that a stochastic map between finite-dimensional Hilbert spaces
is a matrix with positive real entries whose every column sums to one.

\begin{proposition}
  Normalised morphisms in $\Stoch[\FHilb]$ correspond to stochastic
  maps between finite-dimensional Hilbert spaces.
\end{proposition}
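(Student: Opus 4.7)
The plan is to unwind the definition of $\Stoch[\FHilb]$ via the classification of commutative dagger Frobenius algebras in $\FHilb$, and then use Theorem~\ref{thm:pos-elems} to identify CP*--morphisms with matrices having nonnegative real entries in the copyable bases, with the normalisation condition translating directly to stochasticity.

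First I would recall that every commutative normalisable dagger Frobenius algebra $(A,\whitemult)$ in $\FHilb$ is classified by a choice of orthogonal ``copyable'' basis $\{|i\rangle\}_{i \in I_A}$ of the underlying Hilbert space $A$, satisfying $\whitemult(|i\rangle \otimes |j\rangle) = \delta_{ij}\,|i\rangle$ up to a scaling fixed by the normaliser~\cite{coeckepavlovicvicary:bases,abramskyheunen:hstar,heunen:complementarity}. After rescaling these copyable vectors to be unit vectors and absorbing the normaliser $z$ into the counit, we may assume $\whiteunit = \sum_i |i\rangle$ and $\whitecounit = \sum_i \langle i|$, so that the counit becomes the ``sum of coefficients'' linear functional in the copyable basis. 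I would do the same for $(B,\graymult)$, obtaining a basis $\{|j\rangle\}_{j \in I_B}$.

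Next, I would represent any $\V$-morphism $f \colon A \to B$ by the matrix $f_{ji} := \langle j|\, f\, |i\rangle$, and use Theorem~\ref{thm:pos-elems} to characterise membership in $\CPs[\FHilb]((A,\whitemult),(B,\graymult))$ as preservation of positive elements; the tensoring with an auxiliary $(C,\blackmult)$ in part~(b) is redundant in the classical setting, since a commutative algebra factors as a product of copies of $(\C,\mathrm{id})$ and its positive elements factor coordinatewise. The positive elements $\rho \colon I \to (A,\whitemult)$ of a commutative algebra are then exactly the nonnegative combinations $\sum_i p_i\, |i\rangle$ with $p_i \geq 0$: writing $\rho = h^\dag \circ h$ for some $h \colon I \to X \otimes A$ yields $p_i = \sum_x |\langle x,i | h\rangle|^2 \geq 0$, and conversely any such $\rho$ is realised by $h = \sum_i \sqrt{p_i}\,|i\rangle$. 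Applying preservation of positive elements to each basis state $|i\rangle$ shows $f_{ji} \geq 0$ for all $i,j$, and conversely nonnegative entries visibly send nonnegative combinations to nonnegative combinations.

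Having identified $\CPs[\FHilb]((A,\whitemult),(B,\graymult))$ with the set of matrices with nonnegative real entries, I would finally translate the normalisation condition. Under the chosen bases, $\graycounit \circ f = \whitecounit$ reads $\sum_j f_{ji} = 1$ for every $i$, which, together with the entrywise nonnegativity just established, is exactly the definition of a (column-)stochastic matrix. The assignment $f \mapsto (f_{ji})$ is therefore the desired bijection. The main obstacle is the rescaling bookkeeping: because $\Stoch[\V]$ uses normalisable rather than special dagger Frobenius algebras, the copyable basis is only orthogonal and the counit is a nonstandard summation functional until the normaliser is absorbed; getting this rescaling right is what makes the slogan ``normalised CP* = stochastic'' literally true.
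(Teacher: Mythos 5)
Your proposal is correct in substance but takes a genuinely different route from the paper: the paper's entire proof of this proposition is a citation to Keyl's survey, where the identification of unital positive maps on commutative finite-dimensional C*-algebras with stochastic matrices is standard, whereas you give a self-contained argument inside the categorical framework, combining the classification of commutative dagger Frobenius algebras in $\FHilb$ by orthogonal copyable bases with the positive-element characterisation of Theorem~\ref{thm:pos-elems}. What your route buys is an explicit check that the abstract CP*--condition collapses to entrywise nonnegativity in the commutative case; in particular, your observation that the auxiliary system in Theorem~\ref{thm:pos-elems}(b) is redundant because $A \otimes C \cong C^{\oplus n}$ for commutative $A$ and positive elements of a direct sum are coordinatewise positive is precisely the abstract form of the classical fact that positive maps out of commutative C*-algebras are automatically completely positive --- a fact the paper's citation hides. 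Two small repairs are needed. First, a positive element is by definition a morphism $\rho \colon I \to A$ satisfying the CP*--condition, so the decomposition you want is $\whiteaction \circ \rho = $ (a sandwich of the form $g_* \otimes g$), not $\rho = h^\dag \circ h$, which does not typecheck since $h^\dag \circ h$ is a scalar $I \to I$; the conclusion that positive elements are exactly the combinations $\sum_i p_i \ket{i}$ with $p_i \geq 0$ is unaffected, and you should also note (it is easy) that each copyable point $\ket{i}$ is itself a positive element before applying preservation to it. Second, ``absorbing the normaliser into the counit'' is not an operation on a fixed Frobenius algebra --- the counit is determined by the multiplication --- so this step should be phrased as transporting along the evident $*$-isomorphism to the special algebra on the rescaled copyable points and checking that this isomorphism preserves both the CP*--condition and the condition $\graycounit \circ f = \whitecounit$; once that is said, your identification of normalised CP*--morphisms with column-stochastic matrices goes through.
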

\begin{proof}
  See~\cite[3.2.3 and 2.1.3]{keyl:quantuminformation}.
\end{proof}

For any Frobenius algebra $(A,\blackmult)$ we can consider its \emph{copyable
points}: the morphisms $p \colon I \to A$ that are ``copied'' by the
comultiplication, in the sense that $\blackcomult \circ p = p \otimes
p$. This is especially interesting for commutative normalisable dagger Frobenius
algebras, because in $\FHilb$, copyable points form an orthonormal basis for
$A$. Writing vectors in the basis of classical points, one can show
that the normalised positive elements are precisely those 
vectors with positive coefficients that sum to $1$. Thus, normalised
positive elements of a commutative normalisable dagger Frobenius
algebra can be regarded as probability distributions over its copyable points.

So far we have mostly looked at classical and quantum systems in
isolation. But as they live together in a category $\CPs[\V]$, we can also
consider maps between them.
Consider a normalised morphism $P \colon L(H) \to
(A, \blackmult)$ from a quantum to a classical system. Then $P^\dagger
\circ x_i$ is a positive element of $\blackmult$ for each copyable
point $x_i$. Furthermore, any commutative normalisable dagger
Frobenius algebra in $\FHilb$ satisfies $\blackunit = 
\sum_i x_i$. Since $P$ is normalised, $\sum P^\dagger \circ x_i = P^\dagger \circ \sum x_i = P^\dagger
\circ \blackunit = e_H \colon I \to H^* \otimes H$ is the cup from the compact
structure on $H$. Positive elements in $H^* \otimes H$ represent
positive operators from $H$ to itself, and $e_H$ represents the
identity operator. Thus the set $\{ P^\dagger \circ x_i \}$
corresponds to a positive operator-valued measure (POVM). Furthermore,
for any quantum state $\rho$ (i.e.\ normalised positive element in
$L(H)$), it is straightforward to show that $P \circ \rho$ yields the
probability distribution whose $i$th element is the probability of
getting outcome $x_i$, computed via the Born rule.

The dual notion of a morphism $E \colon (A, \blackmult) \to L(H)$ from
a classical system to a quantum system can be thought of as a
preparation. Or, more precisely, as a map carrying a classical
probability distribution over some fixed set of states to a single
mixed state. Choosing a particular decomposition $E \circ \rho$ for a
quantum state $\rho$ gives us a way to represent quantum ensembles.
See also~\cite[3.2.4]{keylwerner:lectures}.

\section{Hilbert spaces}\label{sec:Hilb}

It is high time we looked at some examples. This section proves that
$\CPs[\FHilb]$ is the category of all finite-dimensional C*-algebras
and completely positive maps. The proof also illuminates
Theorem~\ref{thm:embedding}. Namely, $\CPM[\FHilb]$ has
finite-dimensional C*-\emph{factors} for objects. Recall that a
C*-algebra is a factor when its centre is
1-dimensional. Finite-dimensional C*-algebras in fact 
enjoy an easy structure theorem: they are finite direct sums of 
factors, and factors are precisely matrix
algebras~\cite[Theorem~III.1.1]{davidson:cstar}. The following lemma
recalls the structure of these factors, and the subsequent proposition
determines the objects of $\CPs[\FHilb]$.  

\begin{lemma}\label{lem:Mn}
  If $H$ is an $n$-dimensional Hilbert space, then there is an
  isomorphism of algebras between $L(H)$ and
  $\mathbb{M}_n(\mathbb{C})$. Therefore, $H^* \otimes H$ carries
  C*-algebra structure; the involution is as in
  Theorem~\ref{thm:dfa-cstar}. 
\end{lemma}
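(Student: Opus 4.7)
The plan is to reduce the lemma to the familiar matrix representation of operators on a finite-dimensional Hilbert space. For the first claim, I would fix an orthonormal basis $\{|i\rangle\}_{i=1}^n$ of $H$ and define $\Phi \colon L(H) \to \mathbb{M}_n(\mathbb{C})$ by $\Phi(T)_{ij} = \langle i | T | j \rangle$. Linearity and bijectivity are immediate (an operator on a finite-dimensional space is determined by its matrix of coefficients), while multiplicativity $\Phi(ST) = \Phi(S)\Phi(T)$ follows by inserting the resolution of the identity $\sum_k |k\rangle\langle k| = 1_H$, and $\Phi$ plainly sends $1_H$ to the identity matrix.

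For the second claim, I would use the canonical isomorphism $H^* \otimes H \to L(H)$ given by $\langle\phi| \otimes |\psi\rangle \mapsto |\psi\rangle\langle\phi|$, which arises directly from the compact-closed structure of $\FHilb$. Under this identification, the pants multiplication $\pantsalg$ from Proposition~\ref{prop:CPMobs}, whose diagrammatic definition contracts the inner $H \otimes H^*$ pair through the compact cup, sends $(|\psi\rangle\langle\phi|) \otimes (|\psi'\rangle\langle\phi'|)$ to $\langle\phi|\psi'\rangle \cdot |\psi\rangle\langle\phi'|$, which is precisely operator composition on rank-one operators, and hence on all of $L(H)$ by bilinearity. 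The pants unit similarly reduces to $\sum_i |i\rangle\langle i| = 1_H$. Combined with the first claim, this shows that $(H^* \otimes H, \pantsalg)$ is algebra-isomorphic to $\mathbb{M}_n(\mathbb{C})$.

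It remains to exhibit the C*-structure in the form stated. Since $(H^* \otimes H, \pantsalg)$ is a (normalisable) dagger Frobenius algebra in $\FHilb$, Theorem~\ref{thm:dfa-cstar} directly endows it with a C*-algebra structure whose involution is given by the diagrammatic formula in that theorem; a finite-dimensional $*$-algebra isomorphic to $\mathbb{M}_n(\mathbb{C})$ admits a unique C*-norm, so this structure is forced to agree with the standard adjoint on $L(H)$. The main obstacle is the bookkeeping required to identify the pants multiplication with operator composition via the compact cup, but this is essentially a routine translation between string diagrams and Dirac notation once an orthonormal basis is fixed.
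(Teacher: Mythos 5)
Your proof is correct and follows essentially the same route as the paper's: fix an orthonormal basis, check by direct computation that the pants multiplication on $H^* \otimes H$ corresponds to matrix multiplication (the paper does this via the basis assignment $\bra{i}\otimes\ket{j} \mapsto e_{ij}$ rather than through rank-one operators, but it is the same calculation), and obtain the involution from Theorem~\ref{thm:dfa-cstar}. One small aside: your appeal to uniqueness of the C*-norm is a non-sequitur for identifying the involution (the norm does not determine the $*$-operation), but this is harmless since the involution is supplied directly by Theorem~\ref{thm:dfa-cstar} and checked against the conjugate transpose by the same kind of basis computation.
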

\begin{proof}
  First of all, $\mathbb{M}_n(\mathbb{C})$ is a Hilbert space under the
  Hilbert--Schmidt inner product $\inprod{a}{b} = \Tr(a^\dag b)$. It
  has a canonical orthonormal basis $\{e_{ij} \mid i,j=1,\ldots,n\}$,
  where $e_{ij}$ is the matrix all of whose entries are 0 except
  the $(i,j)$-entry, which is 1.
  Pick an orthonormal basis $\{\ket{1}, \ldots, \ket{n}\}$ for $H$, so that
  $\{\bra{i} \otimes \ket{j} \mid i,j=1,\ldots,n\}$ forms an
  orthonormal basis for $H^* \otimes H$. 
  Then the assignment $\bra{i} \otimes \ket{j} \mapsto e_{ij}$
  implements a unitary isomorphism between $H^* \otimes H$ and
  $\mathbb{M}_n(\mathbb{C})$. Direct computation shows that matrix
  multiplication translates across this isomorphism to $\pantsalg$ on
  $H^* \otimes H$. Similarly, taking the conjugate transpose of a
  matrix corresponds to the involution on $H^* \otimes H$ given in
  Theorem~\ref{thm:dfa-cstar}. 
\end{proof}

\begin{proposition}\label{prop:dNFAsinFHilb}
  All dagger Frobenius algebras in \FHilb are normalisable.
\end{proposition}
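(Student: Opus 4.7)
My plan is to use Theorem~\ref{thm:dfa-cstar} to identify $(A, \whitemult)$ with a finite-dimensional C*-algebra, and then exhibit the normaliser as multiplication by a positive-definite central element of $A$ extracted from the loop map.

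The first step is to show that $\ell := \whitemult \circ \whitecomult$ coincides with multiplication by a central element $c := \ell(\whiteunit) \in Z(A)$. For this I would apply the two Frobenius identities. The first identity gives $\whitecomult(ax) = (a \otimes \whiteunit) \cdot \whitecomult(x)$, and applying $\whitemult$ yields $\ell(ax) = a \cdot \ell(x)$; the second gives $\whitecomult(xa) = \whitecomult(x) \cdot (\whiteunit \otimes a)$, and applying $\whitemult$ yields $\ell(xa) = \ell(x) \cdot a$. Specialising the first at $x = \whiteunit$ gives $\ell(a) = ac$, and specialising the second at $x = \whiteunit$ gives $\ell(a) = ca$; hence $ac = ca$ for all $a$, so $c$ is central and $\ell$ is multiplication by $c$. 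Notably this uses no symmetry assumption.

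Next I would verify that $c$ is positive and invertible in the C*-algebra. Invertibility follows because the unit of $A$ makes $\whitemult$ a split epimorphism, so $\whitecomult = \whitemult^\dag$ is injective; then $\ell = \whitemult \circ \whitemult^\dag$ has trivial kernel (as $\inprod{x}{\ell(x)} = \inprod{\whitemult^\dag(x)}{\whitemult^\dag(x)}$), and is therefore bijective on the finite-dimensional Hilbert space $A$, making $c$ invertible in the algebra. For positivity, I would invoke the structure theorem $A \cong \bigoplus_i M_{n_i}(\mathbb{C})$: the central element $c$ must act as a scalar $c_i$ on each factor $M_{n_i}$; the factors are pairwise orthogonal in the Hilbert-space structure on $A$ (a short calculation using the Frobenius form shows $\inprod{x_i}{y_j} = 0$ when $x_i \in M_{n_i}$, $y_j \in M_{n_j}$, $i \neq j$); and since $\ell$ is a positive Hilbert-space operator that restricts to $c_i \cdot \id$ on each orthogonal summand, each $c_i$ is forced to be a positive real number.

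Finally, applying functional calculus in the commutative C*-subalgebra $Z(A)$ (equivalently, inverting block-by-block) produces a central element $c^{-1/2} \in Z(A)$, and I would take the normaliser $z : A \to A$ to be multiplication by $c^{-1/2}$. Then $z$ is central by the correspondence between central maps and central elements, and is positive-definite as a $\V$-morphism because it acts as the positive scalar $c_i^{-1/2}$ on each orthogonal block, giving a positive invertible operator on the Hilbert space $A$. The diagrammatic normalisability equation of Definition~\ref{def:dnfa} then reduces to the block-wise identity $c_i^{-1/2} \cdot c_i \cdot c_i^{-1/2} = 1$. I expect the main subtlety to be the transfer of positivity from $\ell$ as a Hilbert-space operator to positivity of the individual scalars $c_i$, which relies on the orthogonal block decomposition supplied by Theorem~\ref{thm:dfa-cstar}; once that is in hand, the construction of the normaliser is automatic.
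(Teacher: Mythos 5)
Your proof is correct, and it takes a genuinely more structural route than the paper's. The paper also starts from Theorem~\ref{thm:dfa-cstar} to reduce to $\bigoplus_k \mathbb{M}_{n_k}(\C)$, but then works entirely in coordinates: it fixes the orthonormal basis of matrix units $e_{ij}^{(k)}$, writes the multiplication explicitly in that basis, computes $\Tr_A(\whitemult)(e_{ij}^{(k)})=n_k\delta_i^j$ by direct summation, and exhibits the normaliser as $e_{ij}^{(k)}\mapsto n_k^{-1/2}e_{ij}^{(k)}$. You instead first show, basis-freely and from the Frobenius law alone, that the loop map $\ell=\whitemult\circ\whitecomult$ is multiplication by the central element $c=\ell(\whiteunit)$, note that $\ell=\whitemult\circ\whitemult^\dag$ is automatically a positive-definite operator because $\whitemult$ is split epi, and only then invoke the block decomposition to take a central inverse square root. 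The two constructions produce the same normaliser ($c=\sum_k n_k 1_k$ in the paper's basis). What your version buys: the centrality and positive-definiteness of the loop map require no choice of basis and, in particular, no assumption about the normalisation of the structure constants of $\whitemult$ in an orthonormal basis (the paper's displayed formula for $\whitemult$ silently fixes the scaling of the inner product on each block, which your argument never uses); the structure theorem enters only to take the square root of $c$ in the centre, and even that could be replaced by functional calculus on the positive operator $\ell$. What the paper's version buys is an explicit closed form for everything. The one step you should make sure to write out is the pairwise orthogonality of the blocks, which you correctly flag: it is what turns ``positive scalar $c_i^{-1/2}$ on each block'' into an honest positive operator on the Hilbert space $A$, and it follows from $\inprod{a}{b}=\whitecounit(a^*b)$ together with the fact that a C*-involution fixes the minimal central projections.
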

\begin{proof}
  Let $(A, \whitemult, \whiteunit)$ be a dagger Frobenius
  algebra in $\FHilb$. By Theorem~\ref{thm:dfa-cstar}, it must be
  isomorphic to a C*-algebra of the form $\bigoplus_k
  \mathbb{M}_{n_k}(\mathbb{C})$, giving a unitary isomorphism of the associated
  dagger Frobenius algebras. Let $\{ e_{ij}^{(k)} : 0 \leq i,j < n_k
  \}$ form an orthonormal basis for $A$. We can define $\whitemult$ in 
  terms of this basis as
  $ \whitemult(e_{ij}^{(k)} \otimes e_{i'j'}^{(k')}) =
  \delta_k^{k'} \delta_j^{i'} e_{ij'}^{(k)} $.
  From this, we can compute $\Tr_A(\whitemult)$ directly.
  \[
    \Tr_A(\whitemult)(e_{ij}^{(k)}) = 
    \sum_{i'j'k'}
    \left( e_{i'j'}^{(k')} \right)^\dag
    \whitemult \left(
      e_{ij}^{(k)} \otimes
      e_{i'j'}^{(k')}
    \right)
    =
    \sum_{i'j'k'}
    \left( e_{i'j'}^{(k')} \right)^\dag
    \delta_k^{k'} \delta_j^{i'} e_{ij'}^{(k)} 
    =
    \sum_{j'}
    \left( e_{jj'}^{(k)} \right)^\dag e_{ij'}^{(k)}
    =
    \sum_{j'} \delta_i^j = n_k \delta_i^j.
  \]
  Note that $\whitecounit(e_{ij}^{(k)}) = \delta_i^j$. We can now
  define the normaliser $\whitenorm$ as $e_{ij}^{(k)} \mapsto
  \frac{1}{\sqrt{n_k}} e_{ij}^{(k)}$: this map is invertible, satisfies
  $\Tr_A(\whitemult)\circ (\whitenorm)^2 = \whitecounit$, and
  acts by a constant scalar on each summand of $A$ and so is central. 
  % \[ \mu \left(z(e_{ij}^{(k)}) \otimes e_{i'j'}^{(k')} \right)
  % = \mu \left(e_{ij}^{(k)} \otimes z(e_{i'j'}^{(k')}) \right)
  % = z \left( \mu (e_{ij}^{(k)} \otimes e_{i'j'}^{(k')} \right)
  % = \frac{1}{\sqrt{n_k}} \delta_k^{k'} \delta_j^{i'} e_{ij'}^{(k)} \qedhere\]
\end{proof}

Combining the previous proposition with Theorem~\ref{thm:dfa-cstar},
we see that the objects of $\CPs[\FHilb]$ are (in 1-to-1
correspondence with) finite-dimensional C*-algebras.
We now turn to the morphisms of $\CPs[\FHilb]$. First, let us review
what (concrete) completely positive maps between C*-algebras are. Good
references are~\cite{bhatia:positivematrices,paulsen:completelypositive}.
The main notion is that of a matrix algebra $\mathbb{M}_n(A)$ over a C*-algebra
$A$. Its elements are $n$-by-$n$ matrices with entries in $A$, given
C*-structure by 
$(a_{ij}) \cdot (b_{ij}) = \sum_{k=1}^n a_{ik} b_{kj}$, 
$(a_{ij})^* = (a_{ji}^*)$, and $\|(a_{ij})\| = \sup \{ \|(a_{ij}) x\| \mid x \in
A^n, \|x\|=1\}$, where $\|(x_1,\ldots,x_n)\|^2 = \sum_{i=1}^n \|x_1\|^2$.
The following well-known lemma then follows directly.

\begin{lemma}
  If $A$ is a finite-dimensional C*-algebra, then so is $\mathbb{M}_n(A)$. 
  If $f \colon A \to B$ is a linear map, then so is the function $\mathbb{M}_n(f) \colon \mathbb{M}_n(A) \to \mathbb{M}_n(B)$ that
  sends $(a_{ij})$ to $(f(a_{ij}))$. If $f$ is a $*$-homomorphism, then
  so is $\mathbb{M}_n(f)$.
  \qed
\end{lemma}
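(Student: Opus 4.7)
My plan is to handle the three assertions in order, leaning on two equivalent pictures of finite-dimensional C*-algebras: the structure theorem (finite direct sums of full matrix algebras) and a concrete faithful representation on a Hilbert space. The first claim, that $\mathbb{M}_n(A)$ is again a finite-dimensional C*-algebra, is the only one requiring real work; the rest is essentially bookkeeping.

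\textbf{C*-structure on $\mathbb{M}_n(A)$.} I would first observe that if $A$ is finite-dimensional as a vector space, then so is $\mathbb{M}_n(A)$, of dimension $n^2 \dim A$. The algebraic axioms (associativity and distributivity of matrix multiplication, and involution satisfying $(xy)^* = y^*x^*$ and $x^{**}=x$) reduce entry-wise to the corresponding axioms in $A$, so these are immediate. For the analytic axioms---submultiplicativity and the C*-identity $\|x^*x\|=\|x\|^2$---the cleanest route is to fix a faithful $*$-representation $A \hookrightarrow B(H)$ (which exists for any finite-dimensional C*-algebra, e.g.\ by Gelfand--Naimark applied summand-wise to $A \cong \bigoplus_k \mathbb{M}_{n_k}(\mathbb{C})$ from Theorem~\ref{thm:dfa-cstar} and Lemma~\ref{lem:Mn}). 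This induces an injective $*$-algebra map $\mathbb{M}_n(A) \hookrightarrow \mathbb{M}_n(B(H)) \cong B(H^{\oplus n})$, and one checks that the norm defined in the excerpt is precisely the operator norm inherited from $B(H^{\oplus n})$; thus the C*-identity transfers for free. Because $\mathbb{M}_n(A)$ is finite-dimensional it is automatically norm-closed, so it is itself a C*-algebra.

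\textbf{Functoriality on morphisms.} Linearity of $\mathbb{M}_n(f)$ is immediate from linearity of $f$ applied entry-wise. When $f$ is a $*$-homomorphism, I would verify entry-wise that $\mathbb{M}_n(f)$ preserves multiplication and involution: the $(i,j)$-entry of $\mathbb{M}_n(f)\bigl((a_{ij})(b_{ij})\bigr)$ is $f\bigl(\sum_k a_{ik} b_{kj}\bigr) = \sum_k f(a_{ik})f(b_{kj})$, which is the $(i,j)$-entry of $\mathbb{M}_n(f)(a_{ij}) \cdot \mathbb{M}_n(f)(b_{ij})$; and the $(i,j)$-entry of $\mathbb{M}_n(f)\bigl((a_{ij})^*\bigr)$ is $f(a_{ji}^*) = f(a_{ji})^*$, matching $\mathbb{M}_n(f)(a_{ij})^*$.

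\textbf{Main obstacle.} The only genuinely non-trivial point is the C*-identity for the norm on $\mathbb{M}_n(A)$; verifying it directly from the formula $\|(a_{ij})\| = \sup\{\|(a_{ij})x\| : x \in A^n,\, \|x\|=1\}$ together with the C*-identity of $A$ is doable but fiddly. This is why I prefer to off-load it to a faithful representation, where the identity is free because any norm-closed $*$-subalgebra of $B(K)$ is a C*-algebra. Once that step is made, the rest of the lemma amounts to the routine verifications sketched above.
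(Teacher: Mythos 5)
The paper offers no proof of this lemma at all: it is stated as a well-known fact that ``follows directly'' from the preceding definition of the C*-structure on $\mathbb{M}_n(A)$, and is closed with an immediate \qed. Your plan is the standard argument and is correct in outline, so there is nothing to compare it against except to say it supplies the routine verifications the paper omits. One small caution: the step you defer --- ``one checks that the norm defined in the excerpt is precisely the operator norm inherited from $B(H^{\oplus n})$'' --- is not quite free, because the paper's norm is the operator norm of left multiplication on $A^n$ equipped with the $\ell^2$-sum of the C*-norms of the entries, which is not the Hilbert-space norm of $H^{\oplus n}$; identifying the two requires an argument. You can sidestep this entirely: the faithful $*$-embedding $\mathbb{M}_n(A) \hookrightarrow B(H^{\oplus n})$ already shows $\mathbb{M}_n(A)$ admits a C*-norm, and a (finite-dimensional) $*$-algebra admits at most one C*-norm, so $\mathbb{M}_n(A)$ is a C*-algebra regardless of whether the explicit sup formula is the one you verify. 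Your entry-wise checks for linearity and for $*$-homomorphisms are exactly right and need no further comment.
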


\begin{definition}\label{def:cp:cstar}
  A linear map $f \colon A \to B$ between C*-algebras is 
  \emph{positive} when for every $a \in A$ there exists $b\in B$ satisfying
  $f(a^*a)=b^*b$. It is \emph{completely positive} when $\mathbb{M}_n(f)$ is
  positive for every $n\in\mathbb{N}$.
\end{definition}

For us it will be convenient to take another, well-known,
viewpoint. If $A$ and $B$ are finite-dimensional C*-algebras, then so
is the algebraic tensor product $A \otimes B$. 

\begin{lemma}\label{lem:cp:cstar}
  Any finite-dimensional C*-algebra $A$ has a canonical
  $*$-isomorphism $\mathbb{M}_n(A) \cong A \otimes \mathbb{M}_n(\mathbb{C})$. Under this
  correspondence, a linear map $f \colon A \to B$ between C*-algebras
  is completely positive if and only if $f \otimes
  \id[\mathbb{M}_n(\mathbb{C})]$ is positive for every $n \in \mathbb{N}$.
\end{lemma}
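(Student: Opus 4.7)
The plan is to build the isomorphism explicitly and then observe that the completely positive condition translates across it in a completely transparent way.

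First I would define the candidate $*$-isomorphism $\Phi \colon \mathbb{M}_n(A) \to A \otimes \mathbb{M}_n(\mathbb{C})$ on the canonical basis by $\Phi((a_{ij})) = \sum_{i,j=1}^n a_{ij} \otimes e_{ij}$, where $e_{ij}$ are the matrix units of $\mathbb{M}_n(\mathbb{C})$ as in Lemma~\ref{lem:Mn}. This is manifestly a linear bijection because every element of $A \otimes \mathbb{M}_n(\mathbb{C})$ expands uniquely in the basis $\{e_{ij}\}$ with coefficients in $A$. I would then verify the $*$-algebra axioms by direct computation: multiplication matches because $(\sum_{ij} a_{ij} \otimes e_{ij})(\sum_{kl} b_{kl} \otimes e_{kl}) = \sum_{ijl} a_{ij} b_{jl} \otimes e_{il}$, which agrees with $\Phi$ applied to the matrix product $(a_{ij}) \cdot (b_{ij})$; and the involution matches because the involution on the tensor product sends $a \otimes e_{ij}$ to $a^* \otimes e_{ji}$, reproducing the formula $(a_{ij})^* = (a_{ji}^*)$. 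Since both algebras are finite-dimensional C*-algebras and a bijective $*$-homomorphism between such automatically respects the norm, $\Phi$ is a $*$-isomorphism. Naturality in $A$ is immediate from the formula.

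Next I would observe that under $\Phi$ the linear map $\mathbb{M}_n(f) \colon \mathbb{M}_n(A) \to \mathbb{M}_n(B)$ corresponds to $f \otimes \id[\mathbb{M}_n(\mathbb{C})] \colon A \otimes \mathbb{M}_n(\mathbb{C}) \to B \otimes \mathbb{M}_n(\mathbb{C})$. This is a one-line calculation: $\Phi(\mathbb{M}_n(f)(a_{ij})) = \Phi((f(a_{ij}))) = \sum_{ij} f(a_{ij}) \otimes e_{ij} = (f \otimes \id)(\sum_{ij} a_{ij} \otimes e_{ij}) = (f \otimes \id) \Phi((a_{ij}))$. Because $\Phi$ is a $*$-isomorphism, it identifies positive elements on the two sides, and hence $\mathbb{M}_n(f)$ is positive in the sense of Definition~\ref{def:cp:cstar} if and only if $f \otimes \id[\mathbb{M}_n(\mathbb{C})]$ is. The biconditional in the statement then follows by quantifying over $n \in \mathbb{N}$.

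There is no serious obstacle here; the only subtle point is to make sure the algebraic tensor product $A \otimes \mathbb{M}_n(\mathbb{C})$ carries an unambiguous C*-structure. In finite dimensions this is automatic: $A \otimes \mathbb{M}_n(\mathbb{C})$ is a finite-dimensional $*$-algebra which, via the isomorphism $\Phi$ transported from $\mathbb{M}_n(A)$, inherits a (necessarily unique) C*-norm. I would mention this briefly to justify treating $A \otimes \mathbb{M}_n(\mathbb{C})$ as a C*-algebra in its own right before invoking Definition~\ref{def:cp:cstar}.
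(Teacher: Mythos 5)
Your proposal is correct and follows essentially the same route as the paper: both define the $*$-isomorphism $(a_{ij}) \mapsto \sum_{i,j} a_{ij} \otimes e_{ij}$, check that it preserves multiplication and involution, and then unfold Definition~\ref{def:cp:cstar} across this identification. Your version merely spells out the intermediate observation that $\mathbb{M}_n(f)$ transports to $f \otimes \id[{\mathbb{M}_n(\mathbb{C})}]$, which the paper leaves implicit.
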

\begin{proof}
  Borrowing notation from Lemma~\ref{lem:Mn}, the $*$-isomorphism
  $\mathbb{M}_n(A) \to A \otimes \mathbb{M}_n(\mathbb{C})$ is given by $(a_{ij}) \mapsto
  \sum_{i,j=1}^n a_{ij} \otimes e_{ij}$. One easily verifies that this
  preserves the multiplication and involution. The second statement
  follows directly from Definition~\ref{def:cp:cstar} by unfolding this isomorphism.
\end{proof}

Now we have phrased the concrete notion of complete positivity in
terms of tensor products with matrix algebras (cf.\
Theorem~\ref{thm:pos-elems}), and given that matrix algebras in
$\FHilb$ are precisely the algebras of the form $L(H)$ for some
Hilbert space $H$, we can determine $\CPs[\FHilb]$.

\begin{theorem}\label{thm:cpstarfhilb}
  $\CPs[\FHilb]$ is equivalent to the category of finite-dimensional
  C*-algebras and completely positive maps. 
\end{theorem}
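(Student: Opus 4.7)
The plan is to construct a functor $F \colon \CPs[\FHilb] \to \mathbf{fdCstar}_{\mathrm{cp}}$ (with $\mathbf{fdCstar}_{\mathrm{cp}}$ the category of finite-dimensional C*-algebras and completely positive maps) that sends $(A, \whitemult)$ to the C*-algebra from Theorem~\ref{thm:dfa-cstar}, and is the identity on underlying linear maps, and then to show that $F$ is an equivalence. Essential surjectivity on objects is immediate: by Proposition~\ref{prop:dNFAsinFHilb} every dagger Frobenius algebra in $\FHilb$ is normalisable, and by Theorem~\ref{thm:dfa-cstar} every finite-dimensional C*-algebra arises from such an algebra. The content therefore lies in showing that $F$ is full and faithful, i.e.\ that a linear map $f \colon A \to B$ between finite-dimensional C*-algebras satisfies the CP*--condition if and only if it is completely positive in the sense of Definition~\ref{def:cp:cstar}.

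My strategy is to chain together three earlier results. Theorem~\ref{thm:pos-elems}(c) recasts the CP*--condition as: postcomposition with $f \otimes \id[X^* \otimes X]$ preserves positive elements of $(A,\whitemult) \otimes (X^* \otimes X, \pantsalg)$ for every $X \in \FHilb$. Lemma~\ref{lem:Mn} identifies the pants algebra on $X^* \otimes X$ with the matrix algebra $\mathbb{M}_n(\mathbb{C})$, where $n = \dim X$, and identifies the tensor product $(A,\whitemult) \otimes (X^* \otimes X, \pantsalg)$ with $A \otimes \mathbb{M}_n(\mathbb{C})$ as C*-algebras. On the concrete side, Lemma~\ref{lem:cp:cstar} recasts complete positivity as positivity of $f \otimes \id[\mathbb{M}_n(\mathbb{C})]$ for every $n$. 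The two characterisations line up as soon as I can identify the two notions of \emph{positive element} across the functor $F$.

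The main obstacle, and the step I expect to require the most care, is thus the following compatibility lemma: a morphism $\rho \colon I \to A$ in $\FHilb$ is a $\CPs$--morphism into $(A, \whitemult)$ if and only if it picks out a positive element of the C*-algebra $A$ of Theorem~\ref{thm:dfa-cstar}. The reverse direction is easy: any positive $a = b^\dag b$ admits the trivial witness $g = b \colon I \to I \otimes A$ in the CP*--condition. The forward direction is essentially an internal Stinespring-type argument: given a witness $g \colon I \to X \otimes A$, one unfolds the CP*--condition, rewrites the action of $\whitemult$ in terms of the C*-algebra multiplication on $A$, and uses the involution of Theorem~\ref{thm:dfa-cstar} to exhibit $\rho$ as a finite sum of elements of the form $a_i^* a_i$, hence as positive in the C*-algebraic sense.

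Once the compatibility lemma is established, the chain of equivalences reads: $f$ is completely positive iff $f \otimes \id[\mathbb{M}_n(\mathbb{C})]$ is positive for every $n$ (Lemma~\ref{lem:cp:cstar}), iff for every $X$ it sends positive elements of $A \otimes \mathbb{M}_{\dim X}(\mathbb{C})$ to positive elements of $B \otimes \mathbb{M}_{\dim X}(\mathbb{C})$ (compatibility lemma plus Lemma~\ref{lem:Mn}), iff $f$ satisfies the CP*--condition (Theorem~\ref{thm:pos-elems}). Combined with essential surjectivity on objects, this delivers the equivalence $F$ and proves the theorem.
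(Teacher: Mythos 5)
Your overall route is the same as the paper's: the paper also defines the functor acting as Theorem~\ref{thm:dfa-cstar} on objects and as the identity on morphisms, gets surjectivity on objects from Proposition~\ref{prop:dNFAsinFHilb}, and identifies the CP*--condition with complete positivity by chaining Theorem~\ref{thm:pos-elems} with Lemma~\ref{lem:cp:cstar}. The one thing you make explicit that the paper leaves implicit is the ``compatibility lemma'' matching abstract positive elements (morphisms $I \to (A,\whitemult)$ in $\CPs[\FHilb]$) with positive elements of the C*-algebra; that is indeed where the real work sits, so isolating it is the right instinct.

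However, your proof of the \emph{easy} direction of that lemma is wrong as stated. If $a = b^*b$, the witness $g = b \colon I \to I \otimes A$ (i.e.\ $X = I$) does not verify the CP*--condition for $a$: with $X=I$ the right-hand side of~\eqref{eq:cpstar-condition} is $b_* \otimes b \in A^* \otimes A$, a \emph{rank-one} positive operator on the Hilbert space $A$, whereas the left-hand side $\whiteaction \circ a$ is right multiplication by $b^*b$, which is rank one only in degenerate cases (take $A = \mathbb{M}_2(\C)$ and $b = 1$: the left side is $\id[A]$, of rank $4$). The repair is to take $X = A^*$ and $g = \whiteaction \circ b \colon I \to A^* \otimes A$; since right multiplication by $b^*$ is the Hilbert--Schmidt adjoint of right multiplication by $b$, the right-hand side then becomes $R_b \circ R_b^\dag = R_{b^*b} = \whiteaction \circ a$ as required. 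With that correction, and the Stinespring-style argument you sketch for the converse (which does go through: $\whiteaction \circ \rho$ positive as an operator forces $\rho$ self-adjoint, and evaluating against the negative spectral part forces it to vanish), your argument closes and agrees with the paper's.
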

\begin{proof}
  Define a functor $E$ from $\CPs[\FHilb]$ to the category of
  finite-dimensional C*-algebras and completely positive maps, acting
  on objects as in Theorem~\ref{thm:dfa-cstar} and as identity on
  morphisms.
  
  Proposition~\ref{prop:dNFAsinFHilb} and Theorem~\ref{thm:dfa-cstar}
  show that $E$ is surjective on objects. Suppose $f \colon (A,
  \whitemult) \to (B, \graymult)$ is a morphism in $\CPs[\V]$. Then 
  Theorem~\ref{thm:pos-elems} shows that $E(f)$ must be completely
  positive, as characterised by Lemma~\ref{lem:cp:cstar}. Therefore
  $E$ is well-defined. Conversely, any completely positive map $g$
  between C*-algebras satisfies the CP*--condition because of
  Lemma~\ref{lem:cp:cstar}, so $E$ is full and hence an equivalence
  of categories. 
\end{proof}

\begin{remark}\label{rem:Lnotessentiallysurjective}
  It follows from the previous theorem that the embedding $L$
  does not extend to an equivalence of categories because it is not essentially
  surjective. That is, there are finite-dimensional C*-algebras, such
  as $A=\mathbb{M}_1(\C) \oplus \mathbb{M}_2(\C)$, that are not isomorphic to a factor:
  $\dim(A)=1^2+2^2=5\neq n^2=\dim(\mathbb{M}_n(\C))$.
\end{remark}

\section{Sets and relations}\label{sec:Rel}

The previous section justified regarding normalisable dagger
Frobenius algebras as generalised (finite-dimensional) C*-algebras. This section 
considers $\CPs[\V]$ for $\V=\Rel$, the category of sets and relations. Starting
with objects, we immediately see that these `generalised C*-algebras'
are quite different.

\begin{proposition}\label{prop:cpsrelobjs}
  All normalisable dagger Frobenius algebras in $\Rel$ are
  special. Therefore they are (in 1-to-1 correspondence with small) groupoids. 
\end{proposition}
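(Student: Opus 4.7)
The proof splits into two parts: (i) every normalisable dagger Frobenius algebra in $\Rel$ is special; (ii) special dagger Frobenius algebras in $\Rel$ correspond bijectively to small groupoids. My plan is to handle (i) by isolating a structural fact about $\Rel$ --- that every positive isomorphism is already the identity relation --- and then to invoke the known structure theorem for (ii).

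For (i), I would first characterise positive isomorphisms in $\Rel$. Let $z = h^\dag \circ h$ for some relation $h \colon A \to B$. Unfolding the composition, $(x,y) \in z$ iff there exists $b \in B$ with $(x,b),(y,b) \in h$, so in particular $z$ is symmetric. Isomorphisms in $\Rel$ are precisely graphs of bijective functions, so if $z$ is an isomorphism we may write $z = \{(x,f(x)) : x \in A\}$ for some bijection $f$; symmetry then forces $f = f^{-1}$. Totality of $z$ forces $h$ itself to be total, whence $(x,x) \in z$ via any witness, and this pins down $f(x) = x$. Hence $z = 1_A$. Applying this to a normalisable dagger Frobenius algebra, the normaliser $z$ of Definition~\ref{def:dnfa} is both positive and invertible, so $z = 1_A$, and therefore $z^2 = 1_A$: the algebra is special.

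For (ii), I would invoke the (now classical) classification of special dagger Frobenius algebras in $\Rel$ as small groupoids: the carrier is the set of arrows, the multiplication is the partial-composition relation $((g,h),k) \in \whitemult$ iff $k = g \circ h$, the unit singles out the identity arrows, and the comultiplication $\whitemult^\dag$ lists all decompositions of each arrow. Verifying the Frobenius and specialness axioms in $\Rel$ then reduces to routine bookkeeping with composable pairs in a groupoid.

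The main obstacle is step (i), and specifically the observation that positive isomorphisms in $\Rel$ collapse to the identity. It is worth noting that the particular diagrammatic form of the normalisability equation in Definition~\ref{def:dnfa} is never used beyond knowing that $z$ is a positive isomorphism; the poverty of positive isomorphisms in $\Rel$ alone forces specialness.
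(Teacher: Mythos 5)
Your proposal is correct and follows essentially the same route as the paper: both reduce specialness to the observation that the normaliser is a positive isomorphism in $\Rel$ and hence an involutive bijection, and both then cite the known classification of special dagger Frobenius algebras in $\Rel$ as small groupoids. The only difference is that you prove the slightly stronger fact $z = 1_A$ by unfolding positivity directly, whereas the paper stops at $z = z^\dag = z^{-1}$, hence $z^2 = 1_A$, which already suffices.
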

\begin{proof}
  We have to prove that normalisability implies speciality in \Rel;
  for this it suffices to show that $z^2=1$.
  Now, the normaliser $z$ is an isomorphism. In \Rel, this means it is
  (the graph of) a bijection. But $z$ is also positive, and hence
  self-adjoint. This means it is equal to its own inverse. Therefore
  $z^2=1$. The second statement now follows directly
  from~\cite[Theorem~7]{heunencontrerascattaneo:groupoids}.
\end{proof}

Next, we turn to determining the morphisms of $\CPs[\Rel]$.

\begin{definition}
  A relation $R \subseteq \Mor(\cat{G}) \times
  \Mor(\cat{H})$ between groupoids $\cat{G},\cat{H}$ \emph{respects
    inverses} when
  \begin{align}
    gRh \Longleftrightarrow g^{-1} R h^{-1},
    \qquad
    gRh \Longrightarrow \id[\dom(g)] R \id[\dom(h)].\label{eq:homrel}
  \end{align}
\end{definition}

\begin{proposition}\label{prop:cpsrel}
  $\CPs[\Rel]$ is (isomorphic to) the category of groupoids and relations respecting inverses.    
\end{proposition}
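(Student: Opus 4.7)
The plan is to combine Proposition~\ref{prop:cpsrelobjs}, which identifies objects of $\CPs[\Rel]$ with small groupoids $\cat{G}$ via $\cat{G} \mapsto (\Mor(\cat{G}),\whitemult)$ (with $\whitemult$ being groupoid composition), with a concrete unfolding of the CP*-condition~\eqref{eq:cpstar-condition} in $\Rel$. Under this identification the Frobenius unit picks out identity morphisms, the Frobenius cup $\whitecup$ relates $\ast$ to the pairs $(g,g^{-1})$, and the C*-involution from Theorem~\ref{thm:dfa-cstar} coincides with groupoid inversion. Given these identifications, the two clauses of~\eqref{eq:homrel} look like the right relational residue of the CP*-condition: the first is invariance of $R$ under the involution, and the second is a compatibility with the Frobenius unit/counit.

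For $(\Leftarrow)$, given an inverse-respecting relation $R \subseteq \Mor(\cat{G}) \times \Mor(\cat{H})$, I would produce a witness $X$ and $g\colon \Mor(\cat{G}) \to X \times \Mor(\cat{H})$ for~\eqref{eq:cpstar-condition}. The simplest candidate is $X = 1$ with $g = R$; if that fails, the next natural choice is $X = \Mor(\cat{G})$ with $g$ relating $a$ to $(a,h)$ whenever $aRh$, essentially declaring $R$ to be its own Stinespring dilation. Unfolding both sides of~\eqref{eq:cpstar-condition} as explicit set-theoretic relations, the equation reduces to a combinatorial identity in which the Frobenius cups contribute groupoid inversions and the Frobenius units contribute identity morphisms; the two conditions~\eqref{eq:homrel} supply exactly the closures needed for the identity to hold.

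For $(\Rightarrow)$, suppose $R$ satisfies the CP*-condition with some witness $(X,g)$. I would extract each clause of~\eqref{eq:homrel} separately. The first clause, stability under simultaneous inversion, follows from taking daggers in~\eqref{eq:cpstar-condition}: since the dagger of the groupoid Frobenius structure in $\Rel$ is inversion and the displayed equation is symmetric under this dagger, $R$ must also be symmetric under it. The second clause, $gRh \Rightarrow \id[\dom(g)] R \id[\dom(h)]$, should be obtained by composing~\eqref{eq:cpstar-condition} with the Frobenius counits $\whitecounit$ of $\cat{G}$ and $\graycounit$ of $\cat{H}$, which collapse general morphisms onto their endpoints and so force the witness $g$ to land on identities exactly when $R$ sends identities to identities at matching endpoints.

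The central difficulty is the $(\Rightarrow)$ direction: eliminating the existential witness $(X,g)$ to leave a condition purely on $R$. If the direct diagrammatic argument proves cumbersome, I would switch to Theorem~\ref{thm:pos-elems}(c), which recasts the CP*-condition as preservation of positive elements of $(A,\whitemult)\otimes(X^{\ast}\otimes X,\pantsalg)$ for every set $X$. The first task under this alternative is to describe positive elements of $\CPs[\Rel]$ for a groupoid algebra concretely -- I expect them to be precisely the subsets of the morphism set that are closed under inversion and contain the relevant identity morphisms -- after which the equivalence with~\eqref{eq:homrel} should drop out by testing $R\otimes\id$ against a small but distinguishing family of such positive elements indexed by $X$.
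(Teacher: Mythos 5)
Your high-level plan---identify the objects via Proposition~\ref{prop:cpsrelobjs} and then unfold the CP*--condition concretely for groupoid algebras---matches the paper's, and your description of positive elements of a groupoid algebra in $\Rel$ is correct. But there are two genuine gaps. First, in the $(\Leftarrow)$ direction both of your candidate witnesses fail. Writing $S = \grayaction \circ R \circ \whitecoaction$, one computes $S = \{((g,g'),(h,h')) \mid \cod(g)=\cod(g'),\ \cod(h)=\cod(h'),\ (g^{-1}\circ g')\,R\,(h^{-1}\circ h')\}$, and the CP*--condition demands that $S$ equal $\{((g,g'),(h,h')) \mid \exists x.\ g\,\tilde g\,(x,h) \text{ and } g'\,\tilde g\,(x,h')\}$ for some $\tilde g \subseteq \Mor(\cat{G})\times(X\times\Mor(\cat{H}))$. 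With $X=1$ and $\tilde g = R$ the right-hand side is the product relation requiring $gRh$ and $g'Rh'$; with $X = \Mor(\cat{G})$ and $\tilde g = \{(a,(a,h)) \mid aRh\}$ it forces $g = g'$. Both are refuted by $\cat{G}=\cat{H}=\mathbb{Z}_2$ with $R$ the identity relation, which respects inverses while $S$ contains $((e,a),(a,e))$. A working dilation exists but is less obvious (for instance $X=S$ itself, with $a\,\tilde g\,(s,b)$ iff $s=((g,g'),(h,h'))\in S$ and $(a,b)\in\{(g,h),(g',h')\}$); the paper sidesteps this entirely by invoking the general characterisation of completely positive morphisms in $\Rel$, its equation~\eqref{eq:cprel}, which disposes of the existential witness once and for all.

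Second, in the $(\Rightarrow)$ direction your mechanism for the first clause of~\eqref{eq:homrel} is not right: taking daggers of~\eqref{eq:cpstar-condition} yields the CP*--condition for the converse relation $R^\dag$, not closure of $R$ under inversion. The categorical dagger and the C*-involution of Theorem~\ref{thm:dfa-cstar} (which for a groupoid algebra in $\Rel$ is the graph of inversion) are different operations, and only the latter is what clause one concerns. The argument that actually works is the symmetry/diagonal one: from the Stinespring form of $S$ one reads off immediately that $(g,g')S(h,h') \Leftrightarrow (g',g)S(h',h)$ and $(g,g')S(h,h') \Rightarrow (g',g')S(h',h')$, and substituting the explicit description of $S$ above turns these into precisely the two clauses of~\eqref{eq:homrel}, using $\dom(g^{-1}\circ g')=\dom(g')$. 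In short, the ingredient missing from your proposal relative to the paper is the lemma that a relation $(X\times X)\to(Y\times Y)$ is completely positive iff it is invariant under swapping both pairs and closed under passing to the diagonal; with that in hand your unfolding goes through in both directions, and without it neither direction is complete.
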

\begin{proof}
  In general, a morphism $R \subseteq (X \times X) \times (Y \times
  Y)$ in \Rel is completely positive if and only if  
  \begin{equation}\label{eq:cprel}
    (x',x)R(y',y) \Longleftrightarrow (x,x')R(y,y'), 
    \qquad
    (x',x)R(y',y) \Longrightarrow (x,x)R(y,y).
  \end{equation}
  If $\cat{G}$ and $\cat{H}$ are groupoids, corresponding to Frobenius algebras $(G,\whitemult)$
  and $(H,\graymult)$, and $R \subseteq G \times H$, 
  \begin{align*}
    \whitecoaction & = \{ ((g,g'),g^{-1}\circ g') \in G^3 \mid \cod(g)=\cod(g') \}, \\
    \grayaction \circ R \circ \whitecoaction & = \{ ((g,g'),(h,h')) \in G^2 \times H^2 \mid 
    \cod(g)=\cod(g'), \cod(h)=\cod(h'), 
    (g^{-1} \circ g') R (h^{-1} \circ h') \}.
  \end{align*}
  Substituting this into~\eqref{eq:cprel} translates precisely
  into~\eqref{eq:homrel}. 
\end{proof}

% Thus, determining $\CPs[\Rel]$ is straightforward. In particular, we
% did not need to use biproducts, as in the previous
% section. Nevertheless, we can quickly see that $\CPs[\Rel]$ inherits
% biproducts from $\Rel$, using the same strategy as in the previous section.

% \begin{lemma}\label{lem:biproducts:rel}
%   $\CPs[\Rel]$ inherits \dagger-biproducts from \Rel.
% \end{lemma}
% \begin{proof}
%   It is clear that the disjoint union of (small) groupoids is a
%   (small) groupoid, and that the disjoint union of relations
%   respecting inverses again respects inverses. Also, the canonical
%   relations $\cat{G} \oplus \emptyset \to \cat{G}$, $\cat{G} \to
%   \cat{G} \oplus \emptyset$, and $\Delta \colon \cat{G} \to \cat{G}
%   \oplus \cat{G}$ are clearly *-homomorphisms. Therefore the strategy of
%   the proof of Lemma~\ref{lem:biprods:fhilb} above applies to
%   complete the proof. 
% \end{proof}

We close this section by investigating the embedding $L \colon
\CPM[\Rel] \to \CPs[\Rel]$. Recall that a category is
\emph{indiscrete} when there is precisely one morphism between each
two objects. Indiscrete categories are automatically groupoids. 

\begin{lemma}\label{lem:indiscrete}
  The essential image of the embedding $L \colon \CPM[\Rel] \to
  \CPs[\Rel]$ is the full subcategory of $\CPs[\Rel]$
  consisting of indiscrete (small) groupoids.
\end{lemma}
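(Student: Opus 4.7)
The plan is to identify the image of $L$ explicitly using the groupoid correspondence of Proposition~\ref{prop:cpsrelobjs} and then observe that this image is precisely the class of pair groupoids, which is exactly the class of indiscrete groupoids up to isomorphism.

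First I would describe $L(A)$ concretely in $\Rel$. Since duals in $\Rel$ are trivial ($A^* = A$) and $\otimes$ is the cartesian product, we have $L(A) = (A \times A, \pantsalg)$. Unfolding the definition of $\pantsalg$ in terms of the compact structure on $A$, the multiplication is the relation $((a,b),(b',c)) \sim (a,c)$ whenever $b = b'$, and the unit picks out the diagonal $\{(a,a) \mid a \in A\}$. Under the correspondence between normalisable dagger Frobenius algebras in $\Rel$ and small groupoids used in Proposition~\ref{prop:cpsrelobjs} (via \cite[Theorem~7]{heunencontrerascattaneo:groupoids}), this is exactly the \emph{pair groupoid} on $A$: the groupoid whose object set is $A$, whose set of morphisms is $A \times A$ with $\dom(a,b) = b$ and $\cod(a,b) = a$, composition $(a,b)\circ(b,c) = (a,c)$, and inverse $(a,b)^{-1} = (b,a)$.

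Next I would observe that pair groupoids are exactly (up to isomorphism of groupoids) the indiscrete groupoids: every indiscrete groupoid has a unique morphism between any two objects, and the unique isomorphism to the pair groupoid on its object set is induced by labelling each morphism by its pair of endpoints. Conversely, any pair groupoid is manifestly indiscrete. Since an isomorphism of groupoids corresponds, under the same correspondence, to an isomorphism of the associated Frobenius algebras, which in turn is an isomorphism in $\CPs[\Rel]$, this shows that every indiscrete groupoid lies in the essential image of $L$, and conversely that no non-indiscrete groupoid does.

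Finally, to conclude, I would invoke Theorem~\ref{thm:embedding}: $L$ is full and faithful, so its essential image is automatically a full subcategory of $\CPs[\Rel]$, and we have just identified its objects as the indiscrete small groupoids. The mildly delicate step is matching the pants algebra structure on $A \times A$ with the pair groupoid data in the precise form used by \cite{heunencontrerascattaneo:groupoids}; once the multiplication and unit are written out as above, this is a direct comparison, so no real obstacle arises.
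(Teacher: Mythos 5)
Your proof is correct and follows essentially the same route as the paper: both unfold $L(A)=(A\times A,\pantsalg)$ in $\Rel$, recognise it as the pair groupoid on $A$ under the correspondence of Proposition~\ref{prop:cpsrelobjs}, and conclude that these are exactly the indiscrete groupoids. You additionally spell out the converse direction (that every indiscrete groupoid is isomorphic to some $L(A)$), which the paper's proof leaves implicit; that is a small but welcome completion of the argument.
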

\begin{proof}
  Let $X$ be an object in $\CPM[\Rel]$, that is, a set.
  By definition, $L(X)$ corresponds to a groupoid with set of morphisms $X \times X$, and composition
  \[
    (y_1,y_2) \circ (x_1,x_2) = \left\{ \begin{array}{ll} (y_1,x_2) & \text{ if } y_2=x1, \\ \text{undefined} & \text{ otherwise}. \end{array} \right.
  \]
  We deduce that the objects of $L(X)$ correspond to identities,
  i.e.\ pairs $(x_1,x_2)$ with $x_1=x_2$. So objects of $L(X)$
  just correspond to elements of $X$. Similarly, we find that
  $\dom(x_1,x_2)=x_2$ and $\cod(x_1,x_2)=x_1$. Hence $(x_1,x_2)$ is a
  morphism $x_2 \to x_1$ in $L(X)$, and it is the unique such. 
\end{proof}

% \begin{remark}\label{rem:biproductcompletion:rel}
%   Not every (small) groupoid is a disjoint union of indiscrete ones. 
%   This shows that if $\CPs[\V]$ inherits biproducts from \V, it need
%   not be the universal completion under biproducts of $\CPM[\V]$, as
%   was the case for $\V=\FHilb$ by Remark~\ref{rem:biproductcompletion:fhilb}. 
% \end{remark}

\section{Splitting idempotents}\label{sec:idempotents}

This section compares the CP*--construction to Selinger's second
solution to the problem of classical channels. First, recall this
construction of splitting dagger idempotents~\cite{selinger:idempotents}.

\begin{definition}
  Let $\V$ be a dagger category. The category $\Split[\V]$ has as
  objects $(A,p)$ where $p \colon A \to A$ is a morphism in $\V$
  satisfying $p \circ p = p = p^\dag$; its morphisms $(A,p) \to (B,q)$ are
  morphisms $f \colon A \to B$ in $\V$ satisfying $f = q \circ f \circ p$.
\end{definition}

If $\V$ is a dagger compact category, then so is
$\Split[\V]$ (see~\cite[Proposition~3.16]{selinger:idempotents}). 
We will need the following assumption, that is satisfied in both
$\Rel$ and $\FHilb$.

\begin{definition}\label{def:alg-sq-roots}
  A dagger compact category $\V$ is said to \emph{have algebraic
    square roots} when, given any normalisable Frobenius algebra on
  $A$ and any central positive definite morphism $f \colon A \to A$,
  there exists a central morphism $g \colon A \to A$
  such that $f=g \circ g$.
\end{definition}

We thank the anonymous referee for pointing us towards the following
proposition. 

\begin{proposition}\label{prop:split}
  Let $\V$ be a dagger compact category that has algebraic square roots.
  There is a functor $F \colon \CPs[\V] \to \Split[\CPM[\V]]$,
    acting as $F(A,\whitemult,\whitenorm) = \whiteaction \circ 
      \whitecoaction \circ (\whiteconorm \otimes \whitenorm)$ on
      objects, and as $F(f)= \grayaction \circ \graynorm \circ f \circ \whitenorm
      \circ \whitecoaction$ on morphisms. It is full, faithful,
      and strongly dagger symmetric monoidal. 
\end{proposition}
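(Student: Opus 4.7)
The plan is to verify, in order, (i) that $p_A := F(A,\whitemult,\whitenorm)$ is a $\dagger$-idempotent in $\CPM[\V]$, (ii) that $F(f)$ is a morphism $p_A \to p_B$ in $\Split[\CPM[\V]]$, (iii) functoriality, (iv) faithfulness and fullness, and (v) strong dagger symmetric monoidality. The key technical devices will be the Frobenius identities, Lemma~\ref{lem:frob-actions}, Lemma~\ref{lem:norm-alt}, the defining equation of the normaliser in Definition~\ref{def:dnfa}, and — at exactly one crucial step — the algebraic square roots hypothesis of Definition~\ref{def:alg-sq-roots}.

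For (i), self-adjointness $p_A^\dagger = p_A$ follows because taking daggers swaps $\whiteaction$ with $\whitecoaction$ and $\whitenorm$ with $\whiteconorm$, both of which leave the defining expression invariant since the normaliser is positive-definite, hence self-adjoint. Idempotency $p_A \circ p_A = p_A$ reduces, via one Frobenius identity, to the normaliser equation of Definition~\ref{def:dnfa} combined with Lemma~\ref{lem:norm-alt}. To see that $p_A$ satisfies the CPM-condition in $\V$, we use algebraic square roots to factor $\whitenorm$ as $r \circ r$ with $r$ central and self-adjoint, then read off an explicit CPM-dilation by absorbing $r$ into a leg of the comultiplication.

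For (ii) and (iii), the CPM-dilation required of $F(f)$ is obtained from the CP*-dilation $g \colon A \to X \otimes B$ of $f$, after threading central square roots of the normalisers on $A$ and $B$ through both sides of $g$ — this is the place where algebraic square roots are essential. The splitting condition $F(f) = p_B \circ F(f) \circ p_A$ then follows from Lemma~\ref{lem:frob-actions}, as the outer actions/coactions of $p_A$ and $p_B$ reabsorb the inner ones already present in $F(f)$. Functoriality reduces to a calculation in which two adjacent copies of $\whitecoaction$ sandwiching $(\whiteconorm \otimes \whitenorm)^2$ collapse to the identity by the normaliser equation, and $F(\id_A) = p_A$ is by construction the identity on $p_A$ in $\Split[\CPM[\V]]$.

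For (iv), \emph{faithfulness} follows because pre- and post-composing $F(f)$ with $\graymult \circ \graynorm^{-1}$ and $\whitenorm^{-1} \circ \whitecomult$ recovers $f$ by Lemma~\ref{lem:frob-actions}. For \emph{fullness}, given a morphism $h \colon p_A \to p_B$ in $\Split[\CPM[\V]]$, define $f := \graymult \circ \graynorm^{-1} \circ h \circ \whitenorm^{-1} \circ \whitecomult$; the splitting condition $h = p_B \circ h \circ p_A$ combined with the Frobenius equations forces $F(f) = h$, while the CPM-factorisation of $h$, massaged by central square roots of the normalisers, yields a CP*-dilation for $f$. The dagger is preserved since $p_A^\dagger = p_A$ and $F(f^\dagger) = F(f)^\dagger$ is immediate from the graphical form. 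Strong symmetric monoidality uses the shuffle isomorphism $\varphi_{A,B}$ from Theorem~\ref{thm:embedding}, now verified to identify $p_{A \otimes B}$ with $p_A \otimes p_B$ via the Frobenius identities and centrality of the normalisers.

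The \textbf{main obstacle} is reconciling the two notions of dilation: a CP*-factorisation factors through the actions and coactions of the Frobenius algebras, whereas a CPM-factorisation in $\V$ does not, and converting between them requires splitting each central, positive-definite normaliser into commuting square roots that can be pushed through the Frobenius structure. This is exactly the technical role played by Definition~\ref{def:alg-sq-roots}; once that move is justified on both sides, everything else is bookkeeping in the graphical calculus.
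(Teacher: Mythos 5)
Your proposal is correct and follows essentially the same route as the paper's (much terser) proof: well-definedness of the idempotent via Lemmas~\ref{lem:frob-actions} and~\ref{lem:norm-alt} and centrality/self-adjointness of the normaliser, algebraic square roots used precisely to convert a CP*-dilation into a CPM-dilation of $F(f)$, and fullness/faithfulness via the bijective correspondence $f \leftrightarrow F(f)$ between $\CPs$-morphisms and $\Split[\CPM[\V]]$-morphisms. The only cosmetic differences are that the paper obtains complete positivity of $p_A$ directly from Lemma~\ref{lem:frob-actions} without invoking square roots, and attributes the splitting condition $F(f) = p_B \circ F(f) \circ p_A$ to Lemma~\ref{lem:norm-alt} rather than Lemma~\ref{lem:frob-actions}.
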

\begin{proof}
  First, notice that $p=F(A,\whitemult,\whitenorm)$ is indeed a
  well-defined object of $\Split[\CPM[\V]]$: clearly $p=\whiteaction
  \circ \whitenorm \circ \whitenorm \circ \whitecoaction=p^\dag$ by
  centrality of the normaliser, $p \circ p = p$ follows from 
  Lemma~\ref{lem:norm-alt}, and $p$ is completely positive by
  Lemma~\ref{lem:frob-actions}. 
  The assumption of algebraic square roots guarantees that $F(f)$ is
  indeed a well-defined morphism in $\CPM[\V]$; by
  Lemma~\ref{lem:norm-alt}, it is in fact a well-defined morphism in
  $\Split[\CPM[\V]]$. 
  Moreover, it is easy to see that an arbitrary $\V$-morphism
  $h \colon A^* \otimes A \to B^* \otimes B$ is a well-defined
  morphism in $\CPs[\V]$ if and only if it is a well-defined morphism
  in $\Split[\CPM[\V]]$.% by Lemma~\ref{lem:norm-alt}.
  \ctikzfig{cpstar_into_split}  
  Both $\CPs[\V]$ and $\Split[\CPM[\V]]$ inherit composition,
  identities, and daggers from $\V$, so $F$ is a full and faithful
  functor preserving daggers. The
  symmetric monoidal structure in both $\CPs[\V]$ and
  $\Split[\CPM[\V]]$ is similarly defined in terms of that of $\V$, making
  $F$ strongly symmetric monoidal. 
\end{proof}

% However, unlike the embedding $L$ of Theorem~\ref{thm:embedding}, the
% functor $F$ of the previous proposition is in general not injective on
% objects. To see this, let $G$ be any group with nontrivial centre; pick any
% nontrivial central element $g \in G$. By Proposition~\ref{prop:cpsrelobjs},
% setting $\whitenorm=\id$ gives a well-defined object
% $(G,\whitemult,\whitenorm)$ of $\CPs[\Rel]$. Defining $\graynorm$ by
% $x \mapsto g \cdot x$ gives another well-defined object
% $(G,\whitemult,\graynorm)$ of $\CPs[\Rel]$ because $g$ is central. Now 
% \begin{align*}
%   F(G,\whitemult,\graynorm) 
%   & = \{ ((a,b),(c,d)) \in G^4 \mid (g \cdot a)^{-1} \cdot (g \cdot b)
%   = c^{-1} \cdot d \} \\
%   & = \{ ((a,b),(c,d)) \in G^4 \mid a^{-1} \cdot b = c^{-1} \cdot d \} \\
%   & = F(G,\whitemult,\whitenorm),
% \end{align*}
% but $\whitenorm \neq \graynorm$ because $g \neq 1$.

Thus, when $\V$ has algebraic square roots, $\CPs[\V]$ is equivalent
to a full subcategory of $\Split[\CPM[\V]]$: the one obtained by
splitting only the idempotents of the form $F(A,\whitemult,\whitenorm)$. A variation
of~\cite[Proposition~3.16]{selinger:idempotents} shows that splitting
any family of dagger idempotents that is closed under tensor gives a dagger
compact category, and Theorem~\ref{thm:cpstar-smc} follows. 
The proof that we have written out does not need the assumption of
algebraic square roots; moreover, it more explicitly exhibits the
structure of $\CPs[\V]$ as a category of algebras.
% (and because $F$ is not injective on objects). 

% However, the functor $F$ of the previous proposition is not an
% equivalence in general. To see this, consider the case
% $\V=\Rel$. Because dagger idempotents in $\Rel$ are symmetric
% transitive relations, the objects of $\Split[\CPM[\Rel]]$ are pairs of
% a set $X$ and a symmetric, transitive relation $\mathop{\sim} \subseteq X^2
% \times X^2$ satisfying~\eqref{eq:cprel}. Its morphisms are relations
% $R \subseteq X^2 \times Y^2$ such that $R=\mathop{\sim}_Y \circ R \circ
% \mathop{\sim}_X$. Regarding the objects of $\CPs[\Rel]$ as groupoids via
% Proposition~\ref{prop:cpsrelobjs}, the functor $F$ sends a groupoid
% to its set of arrows with $(a,b) \mathop{\sim} (c,d)$ if and only if $a^{-1}
% \circ b = c^{-1} \circ d$. Hence $F$ is essentially surjective on
% objects precisely when the following question can be answered
% positively: given an arbitrary set $X$ and a symmetric, transitive
% relation $\mathop{\sim} \subseteq X^2 \times X^2$ satisfying~\eqref{eq:cprel},
% can one find a groupoid whose set of arrows is $X$, such that $(a,b)
% \sim (c,d)$ precisely when $a^{-1} \circ b = c^{-1} \circ d$?
% But this is not always possible. For example, take any nonempty set
% $X$ with $\mathop{\sim} = \emptyset$. Then $(a,a) \not\sim (a,a)$ for any $a
% \in X$, but if $X$ were the set of arrows of a groupoid, then we would
% of course have $a^{-1} \circ a = a^{-1} \circ a$.

In summary, in sufficiently nice cases, the CP*--construction fits between the
CPM--construction and its idempotent splitting.
\[\xymatrix{
  \CPM[\V] \ar^-{L}[r] & \CPs[\V] \ar^-{F}[r] & \Split[\CPM[\V]]
}\]
Both functors are strongly dagger symmetric monoidal, as well as full
and faithful. 
Moreover, their composition is naturally isomorphic to the canonical
inclusion $\CPM[\V] \to \Split[\CPM[\V]]$. %, given by $A \mapsto
% (A,\id)$ and $f \mapsto f$.
%% with $\alpha_A = \pantsalg \otimes \whitenorm
However, the image of the left functor does not include
classical channels. Similarly, a priori there is no reason why the
right functor should be an equivalence. 
In particular, the middle category seems to capture the right amount
of objects, and provides a constructive way to access them.

\section{Future work}\label{sec:future}

Having an abstract notion of C*-algebra, (and, by extension, an
abstract categorical construction placing classical and quantum
information on equal footing) opens up many avenues for exploration.  
\begin{itemize}
\item Quantum mechanics can be characterised in information-theoretic
  terms~\cite{cliftonbubhalvorson}, but this argument is often
  criticised because it assumes a C*-algebraic framework from the start. 
  The CP*--construction can investigate to what extent this criticism
  is valid and improve on those foundations.
\item We can now abstractly study all sorts of notions from the C*-algebraic
  formulation of quantum information
  theory~\cite{keyl:quantuminformation,keylwerner:lectures}. For
  example, notions of complementarity can be translated between
  abstract and concrete C*-algebras~\cite{coeckeduncan:redgreen,heunen:complementarity}. 
\item The category $\cat{Stab}$ of stabiliser quantum mechanics embeds
  into $\CPs[\Rel]$, opening the door to abstract considerations of
  classical simulable circuits.
\item It would be good to see whether algebraic square roots
    are necessary for Proposition~\ref{prop:split}. We also expect to
    find an example showing that $F$ is not an equivalence. 
\item One could characterise categories of the form $\CPs[\V]$,
  perhaps using environments~\cite{coecke:mixed,coeckeperdrix:environment,coeckeheunen:cp}. 
\item It is worth investigating to what extent our construction 
  generalises to infinite dimension~\cite{abramskyheunen:hstar,coeckeheunen:cp}.
\item On the theoretical side, the CP*--construction might be (lax) monadic.
\item Our construction seems related in spirit to~\cite{vicary:higher}; it would be
  good to make connections precise.
\end{itemize}

\bibliographystyle{eptcs}
\bibliography{cpstar}
\end{document}